\documentclass[3p]{elsarticle}

\usepackage{setspace}
\doublespacing

\journal{Journal of \LaTeX\ Templates}
\bibliographystyle{elsarticle-num}
\biboptions{numbers,sort&compress}

\usepackage{graphics} 
\usepackage{graphicx}
\usepackage{epsfig} 
\usepackage{times} 
\usepackage{amsmath} 
\usepackage{amssymb}  
\usepackage{bm}
\usepackage{bbm}
\usepackage{mathtools}
\usepackage{nccmath}
\usepackage{color,url}
\usepackage{neuralnetwork}
\usepackage{xpatch}
\usepackage{bbm}
\usepackage[hidelinks]{hyperref}

\usepackage {tikz}
\usetikzlibrary {positioning}
\definecolor {processblue}{cmyk}{0.96,0,0,0}
\usetikzlibrary{shapes,arrows}
\usetikzlibrary{calc}
\usetikzlibrary{matrix,chains,decorations.pathreplacing}

\newcommand{\trp}{\intercal}

\DeclareMathOperator*{\minimize}{min\,}

\newtheorem{theorem}{Theorem}

\newdefinition{example}{Example}
\newdefinition{remark}{Remark}
\newdefinition{definition}{Definition}
\newproof{proof}{Proof}
\newproof{pot}{Proof of Theorem \ref{thm2}}

\usepackage[ruled,linesnumbered]{algorithm2e}
\usepackage{algpseudocode}

\begin{document}
\begin{frontmatter}
	
	\title{Model-Free Design of Control Systems over Wireless Fading Channels\tnoteref{mytitlenote}}
	\tnotetext[mytitlenote]{Preliminary results have been presented at the 2020 IEEE International Workshop on Signal Processing Advances in Wireless Communications (SPAWC) \cite{silva2019optimal} and at the 2020 IEEE Conference on Decision and Control (CDC) \cite{lima_cdc_2020_constrained}. We extend the preliminary results presented in the conference papers to include the design of communication-aware control policies, present theoretical results on the stability of near-optimal policies, and present new numerical experiments.}

	\author[upenn]{Vinicius Lima\corref{mycorrespondingauthor}}
	\cortext[mycorrespondingauthor]{Corresponding author}
	\ead{vlima@seas.upenn.edu}
	
	\author[intel]{Mark Eisen}
	\ead{mark.eisen@intel.com}
	
	\author[oxford]{Konstantinos Gatsis}
	\ead{konstantinos.gatsis@eng.ox.ac.uk}
	
	\author[upenn]{Alejandro Ribeiro}
	\ead{aribeiro@seas.upenn.edu}
	
	\address[upenn]{University of Pennsylvania, Philadelphia, PA 19104 USA}
	\address[intel]{Intel Corporation, Hillsboro, OR}
	\address[oxford]{Department of Engineering Science, University of Oxford, Parks Road, Oxford, OX1 3PJ, UK}
	
	\begin{abstract}
		Wireless control systems rely on wireless networks to exchange information between spatially distributed actuators, plants and sensors. The noise in wireless channels renders traditional control policies suboptimal, and their performance is moreover directly dependent on the allocation of wireless resources between control loops sharing the wireless medium, making the design of control and resource allocation policies critical to achieve reliable performance of the wireless control system. The co-design of control-aware resource allocation policies and communication-aware controllers, however, is a challenging problem due to its infinite dimensionality, existence of constraints on the use of communication resources or performance of individual plants, and need for explicit knowledge of the plants and wireless network models. 
To overcome those challenges, we propose a constrained reinforcement learning approach to design model-free control-aware communication policies and communication-aware control policies for wireless control systems.	
		We demonstrate the near optimality of control system performance and stability using near-universal policy parametrizations and present a practical model-free algorithm to learn the co-design policy. Numerical experiments show the strong performance of learned policies over baseline solutions.
	\end{abstract}
	
	\begin{keyword}
		Wireless Control Systems, Resource Allocation, Joint Design, Constrained Reinforcement Learning.
	\end{keyword}
	
\end{frontmatter}

\section{Introduction}
The use of wireless networks to exchange information between actuators, plants and sensors in control systems can add significant flexibility and cost savings to the deployment, installation and maintenance of control systems \cite{park_wireless_2018}.
However, the replacement of wired communications with wireless makes the careful design of control and communication policies to keep the control system operational a critical and challenging problem \cite{hespanha_survey_2007, park_wireless_2018}.
Wireless networks  are characterized by rapidly changing channel transmission conditions known as \emph{fading} \cite{tse_2005, schenato_foundations_2007}; they are also, in general, noisier than standard wired communication and subject to packet losses as the fading conditions deteriorate. 
That implies that components of a wireless control system (WCS) might have to occasionally operate under noisy or missing information --- whereas traditional closed loop control systems are usually designed under the assumption that communication is fast and reliable \cite{park_wireless_2018}. 
Designing wireless control systems involves then finding an optimal way to allocate the resources available in the network among the plants sharing that communication medium, as well as devising control policies that are able to maintain plants operating reliably in face of eventual information loss. Moreover, a robust design of communication and control should allow for the addition of  constraints to either wireless resources or control operation. Finding an optimal solution to this constrained co-design problem, however, is often intractable; meanwhile, finding approximate solutions inevitably requires knowledge of the underlying dynamics of the controlled plants and communication network, which might be inaccurate or often unavailable in practice. 

In standard wireless networks,  resource allocation involves optimizing communication performance measures such as throughput, resource consumption,  and latency against stochastic noise and wireless fading  in the communication channel\cite{yaxin_li_scheduling_2001,fattah_leung_scheduling_wl_2002,Eryilmaz_srikant_2007,Ribeiro2012}. The resulting problem  consists in  optimizing some performance measure over an allocation function, which leads to an infinite dimensional optimization problem that is often hard to solve. 
Resource allocation problems, however, can be cast as statistical learning problems \cite{Ribeiro2012}, motivating the use of data-driven approaches to overcome the challenging nature of the problem and design allocation policies in wireless networks  \cite{sun_learning_2017, eisen_learning_2018, liang_opc_dnns, liang2019deep, zappone2019wireless}. 

To enable the remote operation of \emph{autonomous systems} over wireless networks, the design of resource allocation and control policies should explicitly take into account the interplay between network resources and the dynamics of the plants sharing the network. 
Resource allocation and scheduling for control systems have been studied extensively, see \cite{rehbinder_scheduling_2004, mo_sensor_2011, shi_optimal_2011, gatsis_opportunistic_2015, charalambous_resource_2017, eisen2019control, wu_optimal_2018} to name a few. Meanwhile, the \emph{co-design} of resource allocation and control policies for linear WCSs is studied in \cite{molin2009lqg, bao_iterativedesign_2011, ramesh_schedulers_2013, gatsis_optimal_2014, rabi_sepdesign_2016}. 
By introducing a decentralized information structure \cite[Proposition 1]{gatsis_optimal_2014}, for example, the joint optimization problem can be decoupled into separate control and allocation or scheduling problems, making the co-design problem more tractable: the optimal control policy is a standard LQR controller, and near-optimal communication policies can be found via approximate dynamic programming. Separating the design of control, scheduling and estimation policies, under the appropriate assumptions, is also the approach taken in \cite{molin2009lqg, bao_iterativedesign_2011, ramesh_schedulers_2013, rabi_sepdesign_2016}. The design of control and triggering policies for networked, possibly nonlinear systems is studied in \cite{heydari_codesign_adp}, whereas the design of control and power transmission policies for possibly nonlinear systems is approached in  \cite{hu_codesign_fan_2019}. Although communication costs are penalized, the proposed approach does not take resource constraints into account.

The inaccuracy or unavailability of models in practice, however, limits the applicability of the approaches mentioned above, especially in the presence of wireless fading effects,  which leads to rapid changes in network performance and impacts the performance of the control policy.
Recent advances in machine learning, on the other hand, have motivated the search for data-driven approaches --- notably reinforcement learning (RL) --- to aid the design of resource allocation and scheduling policies in WCSs \cite{demirel_deepcas:_2018,leong_deep_2018,baumann_drl_etc, redder2019deep, silva2019optimal}.
Model-free value-based RL methods are used to design scheduling policies for WCSs in \cite{demirel_deepcas:_2018, leong_deep_2018}. A learned scheduling policy is combined with a model-based controller optimized for the wireless scheduler in \cite{redder2019deep}, while the use of actor-critic algorithms to co-design communication and control policies in a simple event-triggered communication model is proposed in \cite{baumann_drl_etc}. 
A thorough design of control and communication policies in WCSs, however, should necessarily include
 (i) wireless fading states in the communication/control design and (ii) allow for the inclusion of hard long-term constraints on the performance of the system in terms of, e.g., resource utilization or plant safety constraints.

In this paper, we then discuss the model-free co-design of control and resource allocation policies in wireless control systems over fading channels with limited, centralized network access, and subject to additional long-term constraints. The wireless channel is noisy and subject to packet loss based on a channel state and the resource allocated to that particular signal, and thus proper allocation of limited network resources as well as the design of control policies is fundamental to achieve good performance. The co-design problem is formulated as finding optimal resource allocation and control policies, using plant and channel states as inputs, that jointly optimize system performance under further system constraints (Section \ref{sec:prob_formulation}). 
This however results in a hard optimization problem that is intractable to solve exactly and fundamentally relies on precise model knowledge of control system dynamics and communication models. 

The novelty of the paper lies in casting the joint design problem in WCSs  as a constrained reinforcement learning problem (Section \ref{sec:resource_RL}).
Using near-universal parametrizations to parametrize resource allocation and control policies, we rely on recent results from the literature on constrained reinforcement learning to bound the gap between the optimal solution of the codesign problem and the solution found by a near-universal parametrization (Section \ref{sec:resource_RL}).
We also show that, if the optimal solution of the codesign problem is sufficiently stable, so will be the solution found by a near-universal parametrization (Section \ref{sec:resource_RL}).
We further present a primal-dual learning algorithm to find codesign policies that alternates between updating the model parameters via reinforcement learning iterations and updating a dual variable that enforces constraint satisfaction, all without explicit knowledge of plant dynamics or communication models (Section \ref{sec_learning}).  Numerical experiments (Section \ref{sec:num_exp}) show the strong performance of such policies over baseline allocation solutions, highlighting the benefits the proposed approach brings to the design of control systems over wireless fading channels. Throughout the paper, uppercase letters refer to matrices and lowercase letters to vectors. Positive (semi)definiteness of a matrix is indicated by $X (\geq) > 0$. 

\section{Wireless Control Systems}
\label{sec:prob_formulation}
Here we consider a collection of $m$ independent plants communicating over a common wireless network as shown in Figure \ref{fig:wl_control_sys}.
At each time instant $t$, plants send their current state vector $x_t^{(i)} \in \mathbb{R}^p, i = 1, \dots, m, $ to an edge device containing a shared wireless access point (AP) and a centralized, remote controller (RC).
Based on that information, the RC computes the corresponding control actions and sends the control signals back to the plants. The dynamics of each plant $i$ is given by a discrete, time-invariant function $f^{(i)}: \mathbb{R}^{p} \times \mathbb{R}^q \to \mathbb{R}^p $ mapping  a current state vector $x_t^{(i)}$ and corresponding control input $u_t^{(i)} \in \mathbb{R}^q$ to the next state of the system. Each plant is further affected by some random i.i.d. noise $w^{(i)}_t \in \mathbb{R}^p$ with mean zero and covariance matrix $W\in \mathbb{R}^p$, $W > 0$,  standing for eventual disturbances, leading to
\begin{equation}
x^{(i)}_{t + 1} = f^{(i)}(x_t^{(i)}, u_t^{(i)}) + w^{(i)}_t, i = 1, \dots, m.
\label{eq:ind_plant_dyn}
\end{equation}

Note that in \eqref{eq:ind_plant_dyn} the control signal $u^{(i)}$ is computed remotely and then sent to the plants over a wireless network. In standard control systems, one usually assumes perfect communication between plant and controller, and the control policy is designed to optimize a performance criterion involving the plant states and control actions, as in the classical linear quadratic regulator problem. That assumption does not hold in this setting, however. 
Operating control systems over a wireless channel is made complicated by the fact that the wireless communication medium over which the control loop is closed is inherently noisy and the network is resource limited. This, in turn, causes the plant to occasionally operate in open loop due to either packet loss or withheld transmission. If the transmission of the control signal is successful, the feedback control loop is closed, and the plant executes the correct control action as instructed by the RC. When the plant cannot reliably receive the signal, however, we assume it does not execute any control action. 
Under this model, the control input $u^{(i)}_t$ is governed by 
 \begin{equation}
 u^{(i)}_{t} = 
 \begin{cases}
    g^{(i)}(x_t^{(i)}), \text{ closed loop},  \\
    0, \qquad \text{ open loop}, 
 \end{cases}
 \label{eq:plant_switched}
\end{equation}
with $g^{(i)}: \mathbb{R}^p \to \mathbb{R}^q$ a \emph{remote} control policy implemented at the RC using current state information. 

\begin{remark} Note that in \eqref{eq:plant_switched} we can consider a more general formulation with the switched dynamics
 \begin{equation}
 u^{(i)}_{t} = 
 \begin{cases}
    g^{(i)}(x_t^{(i)}), \text{ closed loop},  \\
    \tilde{g}^{(i)}(u_{t-1}^{(i)}), \qquad \text{ open loop},
 \end{cases}
 \label{eq:plant_switched_general}
\end{equation}
where we further consider a \emph{local} control policy  $\tilde{g}^{(i)}: \mathbb{R}^q \to \mathbb{R}^q$ that can be implemented by the plant using only previous control information when the feedback loop is open.  Simple cases of the local policy include, e.g., using the previous input, i.e. $\tilde{g}^{(i)}(u_{t}^{(i)}) = u_{t-1}^{(i)}$, or no input, i.e.  $\tilde{g}^{(i)}(u_{t}^{(i)}) = 0$ as in \eqref{eq:plant_switched}. Numerical experiments in this paper consider the latter case, and thus we adopt formulation \eqref{eq:plant_switched} in the remainder of the paper for the sake of clarity. 
In principle, one could design the local control policy as well, but that results in an (even) harder optimization problem that we do not address here. We refer the interested reader to \cite{gatsis_optimal_2014} for a more detailed discussion.
\end{remark}

\begin{remark}
Observe in the Fig. \ref{fig:wl_control_sys} and the switched dynamics in \eqref{eq:plant_switched} that the open-loop configuration of the wireless control system is restricted to the actuation, or ``downlink'', stage of the control cycle and not the sensing, or ``uplink'', stage of the cycle. This is to say that we assume that the AP always has state information of all plants available and can thus allocate resources and compute control decisions, but control packets may be lost. This model is reasonable in practice as sensing devices, e.g. cameras, are more likely to be stationary in the environment and require high data rate transmissions, thus motivating the use of wired connections that do not suffer packet loss. Alternatively, the possible mobility of the plants themselves necessitates a wireless connection to the AP, which is practically feasible due to lower data rates needed to transmit control signals. In any case, we point out that the methodology developed in this work can be easily extended to the case of wireless uplink.
\end{remark}

\tikzstyle{block} = [draw, fill=blue!40, rectangle, 
    minimum height=1cm, minimum width=5em]
\tikzstyle{largerblock} = [draw, fill=blue!20, rectangle, 
    minimum height=1cm, minimum width=21em,style=dashed]
    \tikzstyle{largerblock2} = [draw, fill=blue!40, rectangle, 
    minimum height=1cm, minimum width=16em]
\tikzstyle{sum} = [draw, fill=blue!20, circle, node distance=1cm]
\tikzstyle{input} = [coordinate]
\tikzstyle{output} = [coordinate]
\tikzstyle{pinstyle} = [pin edge={to-,thin,black}]

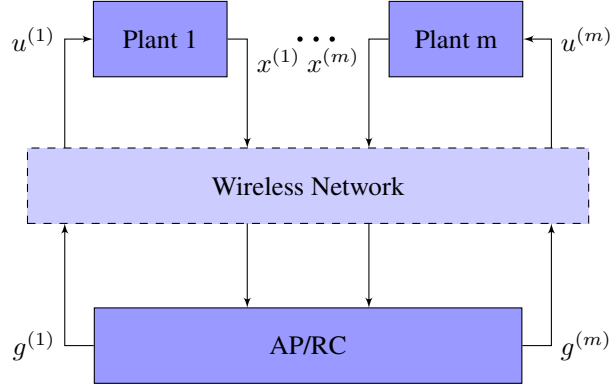
\begin{figure}
 \centering
\begin{tikzpicture}[auto, node distance=2cm,>=latex']
    \node [largerblock] (measurements) {Wireless Network};
        \node [block, above left of=measurements,
            node distance=2.75cm] (system) {Plant 1};
            \node [block, above right of=measurements,
            node distance=2.75cm] (systemm) {Plant m};
    \node [largerblock2, below of=measurements,node distance = 2.12cm](central_controller){AP/RC};
    \node [output, above of=measurements, node distance = 0.5cm] (aux1) {};
    \node [output, right of=aux1, node distance = 0.8cm] (teste) {};
     \node [input, right of=teste, node distance = 2.7cm] (teste2) {};
     \node [input, left of=aux1, node distance = 0.8cm] (auxsys1) {};
     \node [input, left of=auxsys1, node distance = 2.7cm] (auxsys1-1) {};
    \draw [->] (system.east) -| node [name=x, below right] {$x^{(1)}$}(auxsys1);
    \draw [->] (systemm.west) -| node [name=xm, below left] {$x^{(m)}$}(teste);
    \foreach \i in {0.45,0.55,0.65}
        \fill ($(system.east)!\i!(systemm.west)$) circle[radius=1pt];
    \node [output,below of=measurements, node distance = 0.5cm](aux_below){};
    \node [output, right of=aux_below, node distance = 0.8cm](aux_below_right){};
    \node [output, left of=aux_below, node distance = 0.8cm](aux_below_left){};
    \node [output,above of=central_controller, node distance = 0.5cm](aux_central){};
    \node [output, right of=aux_central, node distance = 0.8cm](aux_central_right){};
    \node [output, left of=aux_central, node distance = 0.8cm](aux_central_left){};
    \draw [->] (aux_below_right) -- (aux_central_right);
    \draw [->] (aux_below_left) -- (aux_central_left);
    \node [input, left of=aux_below_left, node distance = 2.4cm] (aux_ap1){};
    \node [input, left of=auxsys1, node distance = 2.4cm] (aux_ap3){};
\node [input, right of=aux_below_right, node distance = 2.4cm] (aux_ap2){};    
\node [input, right of=teste, node distance = 2.4cm] (aux_ap4){};   
    \draw [->] [->] (aux_ap3) |- node [name=resource1, left] {$u^{(1)}$}(system.west);
    \draw [->] [->] (aux_ap4) |- node [name=resourcem, right] {$u^{(m)}$}(systemm.east);
     \draw [->]  (central_controller.west) -| node [name=pow1, left] {$g^{(1)}$} (aux_ap1);
    \draw [->] (central_controller.east) -| node [name=powm, right] {$g^{(m)}$ } (aux_ap2);
\end{tikzpicture}
\caption{ Wireless control system made up by a collection of $m$ independent plants with internal states $x^{(i)}, i = 1, \dots, m$. Plants communicate with a remote, or edge, controller (RC) over a wireless communication network. The wireless network consists of different channels with wireless fading states $h^{(i)}$. Access to the network is managed by an access point (AP) co-located with the RC.
} 
\label{fig:wl_control_sys} 
\end{figure}

\subsection{Wireless communication model}
\label{subsec:wireless_comm_model}
Wireless communication channels are prone to packet loss due to random disturbances present in the medium. Moreover, wireless channels are characterized by rapidly changing transmissions strengths, known as wireless fading \cite[ch. 2]{tse_2005}. 
The current wireless fading in the channel and the resource level with which an information packet is sent will in turn impact the reliability of that communication channel.
 Let $h^{(i)} \in \mathcal{H} \subseteq \mathbb{R}^{n}_+$  be a random variable drawn from a probability distribution $\chi(h)$ representing the transmission power of the transmitter associated to plant $i$.  Let also $\alpha^{(i)} \in \mathbb{R}^{n}_+$ the resource allocated to the signal sent by plant $i$. Given the communication model, current channel states $h := [h^{(1)}, \hdots, h^{(m)}]$, and set of allocated resources $\alpha := [\alpha^{(1)}, \hdots, \alpha^{(m)}]$, each plant experiences a signal-to-noise-ratio (SNR) given by a function $\varsigma^{(i)}(h, \alpha) : \mathbb{R}^{m \times n}_{+} \times \mathbb{R}^{m \times n}_+ \rightarrow \mathbb{R}_+$, with 
\begin{equation}
\varsigma^{(i)}(h, \alpha) := h^{(i)} \alpha^{(i)}.
\end{equation}

The SNR value experienced by each plant will determine the probability of successfully receiving the control signal. 
Let then $v: \mathbb{R}_{+} \rightarrow [0,1]$ a function that, given an SNR value, returns the probability of successful transmission. Under SNR $\varsigma$ a particular communication channel is limited by its capacity, and a packet can almost surely be successfully decoded so long as the fixed transmission rate $r$ does not exceed the channel capacity, otherwise it is almost surely lost \cite[ch. 5]{tse_2005}. In practice, fixed packet sizes lead to a delivery rate function $v(\varsigma)$ that takes a continuum of values in $[0, 1]$. 

In the context of wireless control systems, $v(\varsigma)$ gives the probability of closing the control loop at time instant $t$ under some resource allocation $\alpha_t$ and channel state $h_t$. The controller dynamics in \eqref{eq:plant_switched} can then be written as
\begin{equation}
 u^{(i)}_{t} = 
 \begin{cases}
    g^{(i)}(x_t^{(i)}), \text{ w.p. } v( \varsigma^{(i)}(h_t, \alpha_t)),  \\
    0, \qquad \text{ w.p. } 1- v(\varsigma^{(i)}(h_t, \alpha_t)). 
 \end{cases}
 \label{eq:plant_switched_prob}
\end{equation}
Note that, according to this model, it is possible to use the amount of resource assigned to a particular control signal to regulate the reliability of that packet transmission. 
As can be seen in \eqref{eq:plant_switched_prob}, allocating more resource to the control signal sent to a particular plant will increase the probability of that control loop closing, and, in turn, of the plant executing the correct control action. Most practical systems, however, have limited resources to be distributed between the communication channels, and properly distributing communication resources among the plants is thus essential to maintain reliable operation of the control system.  
At the same time, the decisions made to determine the wireless resource $\alpha_t$ and  the control input $u_t$ are intrinsically linked by the control and communication structure outlined in equations \eqref{eq:ind_plant_dyn}-\eqref{eq:plant_switched_prob}. In the following section we formulate the co-design of control and communication policies that optimize system performance in the presence of a wireless fading channel. 

\subsection{Optimal design over fading channels}

Under the centralized shared access model shown in Figure \ref{fig:wl_control_sys}, we consider the optimal design of both the control system and wireless network. 
As seen in \eqref{eq:plant_switched_prob}, the performance of a control policy is closely tied to the state of the communication channel, given by the fading condition $h_t$, as well as the resources allocated by the AP, given by $\alpha_t$. Under limited resources, the optimal design problem consists of both a controller design and resource allocation design that cooperatively keep all the plants in desirable states. For the latter case, resource allocation is clearly guided by need,  in terms of various plant states $x^{(i)}_t$ for $i=1,\hdots,m$, and guided by cost, in terms of fading states $h^{(i)}_t$ for $i=1,\hdots,m$. As both of these change over time, we search for a resource allocation function $\alpha(h,x)$ that, given current fading states  $h_t := [h_t^{(1)}, \hdots, h_t^{(m)}]$ and plant states $x_t := [x_t^{(1)}, \hdots, x_t^{(m)}]$, distributes resources available in $\mathcal{A}$ while respecting the constraints on resource utilization or performance of the plants, which  can be represented by some utility function $l(x, h, \alpha ): \mathbb{R}^{mp} \times \mathbb{R}^m \times \mathbb{R}^m \to \mathbb{R}^r$. Note that  the generic structure of the resource allocation function $\alpha(h,x)$, set $\mathcal{A}$ and utility function  $l(x, h, \alpha)$ may encompass a wide range of problems in communication systems, e.g., power allocation, user scheduling and frequency allocation.

The controller, on the other hand, consists of a mapping $u(h,x): \mathbb{R}^m \times \mathbb{R}^{mp}  \to \mathbb{R}^{mq} $ that, given plant states and channel conditions, returns control signals to be executed by the plants. Observe that, relative to the standard control policy defined in \eqref{eq:plant_switched}, we have expanded  the policy to include the channel state $h$ as input. This is necessary due to the fact that the channel state will have an impact on the resource allocation as governed by the communication policy $\alpha(h,x)$, which in turn effects the probability of closing the control loop. 
 By utilizing the channel state as an input, the control policy is capable of considering these effects in determining its control action.  We also assume that the AP and RC only have access to noisy observations of the plant states and channel conditions, i.e., control and allocation decisions are computed based on
\begin{equation}
[ \tilde{h_t}; \tilde{x_t} ] =  \left[ h_t;x_t  \right] + w_t^{(o)}, 
\label{eq:fading_state_estimate}
\end{equation}
where the observation noise $w_t^{(o)}$ is a zero-mean disturbance with covariance $W^{(o)}$. The observation noise stands for potential measurement errors of channel and control states, and observing measurement errors during training allows the co-design policy to account for imperfectness of channel or control state information during deployment. Note that the observation noise makes the resource allocation problem a partially observable Markov decision process; we thus restrict our attention to Markovian control and allocation policies in order to solve the codesign problem in a model-free manner using a primal-dual reinforcement learning approach. We further assume control actions are restricted to the set $\mathcal{U}$, which may define, e.g., a range of values, i.e. $\mathcal{U}:=[u_{\min}, u_{\max}]^m$. 

The optimal design of a wireless control system can thus be formulated as the joint selection of resource allocation policy $\alpha(h,x)$ and control policy $u(h,x)$ to keep plants operating around an equilibrium point or desirable state under a given set of wireless channel conditions and resource constraints.
 The performance of the resource allocation and control policies is measured by a quadratic cost that penalizes deviations of the plants states from the equilibrium point (assumed to be $0$ without loss of generality) and large control efforts. 
 The optimal co-design problem takes the form
\begin{equation}
 \begin{aligned}
    P^*=\minimize_{\pi(\cdot)} \, &\mathbb{E}^{\pi(\cdot)}_{x_0}  \left[\sum_{t = 0}^{\infty}\gamma^t \left( x_t^\trp Q x_t  + u_t^\trp R u_t \right) \right] \\
  \text{s.t.} \, &\mathbb{E}_{x_0}^{\pi(\cdot)} \left[ \sum_{t = 0}^\infty \gamma^t l^{(i)}(x, h, \alpha) \right] \leq 0 , i = 1, \dots, r \\
  & \pi = \left[ \alpha(h, x) \in \mathcal{A};  u(h, x) \in \mathcal{U} \right] \\
 \end{aligned}
 \label{eq:constrained_optimal_prob}
\end{equation}
with $Q \geq 0$, $R > 0$ cost matrices weighing the contribution of the plants states and control actions, respectively, to the overall cost; $\gamma \in [0, 1]$ a discount factor, and $l^{(i)}(\cdot)$ the $i$-th component of the constraint. At each time $t$, the AP uses resource $\alpha_t^{(i)} = [\alpha(h_t, x_t)]_i$ to send the control signal $u_t^{(i)} = [u(h_t, x_t)]_i$ back to plant $i$. The communication exchange subsequently occurs with success rate given by $v(\varsigma^{(i)}(h^{(i)}_t, \alpha_t^{(i)}))$ and plant $i$ evolves via the switched dynamics in \eqref{eq:plant_switched_prob} accordingly. Having access to information about both channel conditions and plant states allows the AP to balance communication (adjusting channels' transmission conditions and assigned resource, for example) and control needs (prioritizing more unstable plants, for example) while simultaneously selecting a corresponding control action.  We further emphasize in \eqref{eq:constrained_optimal_prob} that the set of permissible resource allocation actions $\mathcal{A}$ and associated SNR function $\varsigma(h,\alpha)$ moreover defines the communication architecture and network deployment scenario under which the control systems operate.
  We also emphasize that the objective of this approach is not to train a single policy that works well across different realizations, but rather to propose a method that can learn policies tailored to a specific wireless control system.

Observe that the co-design problem in \eqref{eq:constrained_optimal_prob} involves optimizing a performance metric over the allocation and control functions $\alpha(h, x)$ and $u(h, x)$ while satisfying the constraints established by $l(x, h, \alpha)$. 
This leads to a constrained, infinite-dimensional optimization problem that is, in practice, 
very challenging to solve --- even in the absence of constraints. Existing approaches have aimed to approximate solutions for linear systems by finding or imposing a separation principle between the communication and control layer decisions \cite{molin2009lqg, ramesh_schedulers_2013, gatsis_optimal_2014}. Those approaches fundamentally rely on model knowledge --- i.e. plant dynamics, communication network models, and channel distributions ---  
and optimal separation of communication and control policies often requires decentralized information structures.
The lack of model knowledge in practice and the challenging nature of the problem have previously motivated the use of learning techniques for resource allocation and wireless control, cf., e.g., \cite{demirel_deepcas:_2018,leong_deep_2018}. Here, on the other hand, we leverage constrained reinforcement learning to design algorithms that can learn feasible resource allocation and control policies and be implemented without any model knowledge of plants or communication models. Before proceeding, we illustrate in the following examples the form taken by the constraints and associated resource allocation set $\mathcal{A}$ in some commonly considered communication and control models.
 
 \begin{example}[Sum-Power Constraints]\label{ex_power_allocation}
 	Traditionally, in wireless communication networks, one wants to minimize the total power used by the AP, in which case the long-term constraint can be written as
 	\begin{equation}
 		\mathbb{E}_{x_0}^{\pi(\cdot)} \left[  \sum_{t = 0}^{\infty} \gamma^t \left( \sum_{i = 1}^m \alpha^{(i)}_t 
 		\right) \right] \leq p_{\max},
 	\end{equation}
 	with $p_{\max}$ the maximum power budget to be used by the AP. 
 \end{example}

 \begin{example}[Region Constraints]\label{ex_fdma} One often wants to limit the amount of time each plant $i$ in the WCS operates outside a region of interest $\mathcal{R}^{(i)} \subset \mathbb{R}^p$ of the state space, that is,
		\begin{equation}
			\mathbb{E}_{x^{(i)}_0}^{\pi(\cdot)} \left[ \sum_{t = 0}^\infty \gamma^t  \mathbbm{1} \{x_t^{(i)} \not \in \mathcal{R}^{(i)} \} \right] \leq T_i
		\end{equation}
	with $\mathbbm{1}$ standing for the indicator function and $T_i$ a bound on the time spent by plant $i$ outside $\mathcal{R}^{(i)}$.
 	On the other hand, given that all transmissions are handled by the single AP, the resource allocation consists of a shared power resource with total budget $\alpha_{\max}$. As such the resource constraint set can be given by the scaled $m$-dimensional simplex
 	\[
 	\mathcal{A} := \left\{\alpha \in \mathbb{R}^m_+ \, : \, \sum_{i = 1}^m \alpha^{(i)} \leq \alpha_{\max} \right\}.
 	\]  
 \end{example}

\begin{remark}
	 In this paper we present an approach to learn feasible policies that respect the constraints during execution, and not necessarily during training. That is different from the traditional view in areas such as safe reinforcement learning, where one typically wants to ensure that the agent does not visit unsafe regions during  training \cite{Cheng2019EndtoEndSR}. 
\end{remark}

\section{Co-Design via Reinforcement Learning}
\label{sec:resource_RL}
Since the control and resource allocation policies are assumed to be Markovian, one can view the co-design problem in  \eqref{eq:constrained_optimal_prob} as a constrained Markov decision process (MDP), which can be solved in a model-free manner using reinforcement learning. 
Reinforcement learning (RL) represents the intuition that learning occurs in interaction with the environment: an agent executes an action, receives a one-step cost from the environment, transitions to a new state and continues to explore the environment while optimizing the cumulative cost. 
Agent actions $a$ are given by some stochastic policy $\pi(a|s)$ defined on a policy space $\Pi$, and each transition from a state $s_t$ to a new state $s_{t + 1}$ incurs a cost per stage $r_t$.
The problem consists then in finding the policy $\pi^*(\cdot)$ that achieves the minimum cumulative cost of those one-step transitions \cite{sutton_reinforcement_learning,hernandez-lerma_discrete-time_1996}. 

Consider then a centralized co-design agent located at the edge device in Figure \ref{fig:wl_control_sys} that  controls both the AP --- making communication decisions --- and the RC --- making control decisions. The decision-making agent takes as inputs a global state $s_t \in \mathcal{S} \subseteq \mathbb{R}^{m(  1 + p)}$ containing noisy plant states and channel conditions,
\begin{equation}
	s_t = \left[ \tilde{h}_t; \tilde{x}_t \right] = \left[ \tilde{h}_t^{(1)}, \hdots, \tilde{h}_t^{(m)}; \tilde{x}_t^{(1)}, \hdots, \tilde{x}_t^{(m)}\right].
	\label{eq:joint_design_state_RL}
\end{equation}
At each time instant the agent outputs an action $a_t \in \Gamma = (\mathcal{A} \times \mathcal{U}) \subseteq \mathbb{R}^{m (n+q)}$ that consists of a resource allocation decision to be implemented by the AP and control signals sent to the plants, 
\begin{equation}
	a_t = \left[ \alpha_t^{(1)}, \dots, \alpha_t^{(m)}; u_t^{(1)}, \dots, u_t^{(m)}  \right],
	\label{eq:joint_design_action_RL}
\end{equation}
with the agent's policy $\pi$ made up of $\alpha(\cdot)$ and $u(\cdot)$. Considering all functions when optimizing the policy is prohibitively expensive, as such it is common to approximate $\pi$ with a stochastic policy $\pi(\cdot; \theta)$ that is parametrized by a finite dimensional parameter $\theta \in \mathbb{R}^{n}$,
\begin{equation}\label{eq_param}
	\pi(a|s) = \pi (a | s; \theta).
\end{equation}
The parametrized co-design problem can then be written as
\begin{equation}
	\begin{aligned}
		& P^*_\theta =  \minimize_{\theta} J(\theta) \\
		& \text{s.t.} \, \mathbb{E}_{x_0}^{\pi(\cdot; \theta)} \left[ \sum_{t = 0}^\infty \gamma^t l(x, h, \alpha) \right] \leq 0  \\
		& \qquad \pi (a | s; \theta) = \left[ \alpha(\cdot; \theta) \in \mathcal{A}; u(\cdot;  \theta) \in \mathcal{U} \right] \\ 
	\end{aligned}
	\label{eq:constrained_optimal_prob_param_single_agent}
\end{equation}
with
\begin{equation}\label{eq_value1}
	J(\theta) = \mathbb{E}^{\pi(\cdot; \theta)}_{x_0} \left[\sum_{t = 0}^{\infty} \gamma^t ( x_t^\trp Q x_t + u_t^\trp R u_t ) \right].
\end{equation}
There are two critical observations to make in the parametrized problem defined in \eqref{eq:constrained_optimal_prob_param_single_agent}. Firstly, while the parametrization in \eqref{eq_param} gives a significant computational and practical advantage by restricting the dimensionality of the co-design policy, this naturally incurs a loss of optimality with respect to the original, non-parametrized problem \eqref{eq:constrained_optimal_prob}. Secondly, note that standard value-based RL methods operate directly on the value function in \eqref{eq_value1} \cite{sutton_reinforcement_learning}, and thus do not address the constraints $l(x,h,\alpha)$ or the constraint sets $\mathcal{A}$ and $\mathcal{U}$. We tackle both of these issues through the utilization of near-universal approximators and Lagrangian duality, respectively, and analyze their impact on the optimality and subsequent stabilization capabilities of the learned co-design policy in the proceeding sub-section.

\subsection{The Lagrangian Dual Problem}
\label{subsec:primal_dual_rl}

To account for the constraint in \eqref{eq:constrained_optimal_prob_param_single_agent}, we formulate the Lagrangian dual of the constrained optimization problem with the introduction of the dual variable $\lambda \in \mathbb{R}^r$ associated to the constraint $l(\cdot)$, 
\begin{equation}
	\begin{aligned}
		\mathcal{L} (\theta, \lambda) &:= J(\theta)
		+ \lambda^\trp \mathbb{E}_{x_{0}}^{\pi(s; \theta)}  \left[  \sum_{t = 0}^\infty \gamma^t l(x_t, h_t, \alpha_t)  \right] \\
		&= \mathbb{E}_{x0}^{\pi(s; \theta)} \left[ \sum_{t = 0}^\infty \gamma^t \tilde{r}_t \right],  
	\end{aligned}
	\label{eq:lagrangian_param}
\end{equation}
where the penalized one-step cost $\tilde{r}_t$ is defined as
\begin{equation}
	\tilde{r}_t :=  x_t^\trp Q x_t + u_t^\trp R u_t + \lambda^\trp  l(x_t,h_t,\alpha(s_t; \theta)),
	\label{eq:dual_cost_penalized}
\end{equation}
which follows from the linearity of expectation. 
Given the Lagrangian in \eqref{eq:lagrangian_param}, the dual function associated to the constrained optimization problem \eqref{eq:constrained_optimal_prob_param_single_agent} is defined as
\begin{equation}\label{eq_dual}
	d_\theta(\lambda) \coloneqq \min_{\theta \in \mathbb{R}^n} \mathcal{L}(\theta, \lambda).
\end{equation}

Observe that, for a fixed $\lambda$, evaluating the dual function in \eqref{eq_dual} corresponds to a standard MDP problem with the objective given by the Lagrangian $\mathcal{L}(\theta,\lambda)$ and one-step cost given in \eqref{eq:dual_cost_penalized}. The inner optimization problem can thus be solved using RL methods --- see \cite{sutton_reinforcement_learning} for a survey on standard RL methods. The complete dual optimization problem consists then in maximizing the dual function with respect to $\lambda$,
\begin{equation}
	D_\theta^* = \max_{\lambda \geq 0 } d_\theta(\lambda) = \max_{\lambda \geq 0 } \min_{\theta} \mathcal{L} (\theta,\lambda).
	\label{eq:dual_opt_problem}
\end{equation} 
Although the original, constrained problem \eqref{eq:constrained_optimal_prob_param_single_agent} is non-convex, the dual function in \eqref{eq_dual} is the minimum of linear functions in $\lambda$ and is thus concave. That makes the dual problem in \eqref{eq:dual_opt_problem} convex, since it consists in maximizing a concave function subject to the convex constraint $\lambda \geq 0$ \cite{boyd-cvx-opt}.
For a given dual parameter $\lambda$, the policy parameter $\theta$ can be recovered as $\theta(\lambda) = \arg \min_{\theta} \mathcal{L}(\theta, \lambda)$.

This dual formulation of the constrained optimization problem provides a framework to handle the resource utilization constraints and learn feasible policies via a bilevel optimization consisting of an inner RL problem of the primal variable $\theta$ in \eqref{eq_dual} and an outer convex optimization problem of the dual variable $\lambda$ in \eqref{eq:dual_opt_problem}. 
Solving the dual, parametrized problem in equations \eqref{eq:lagrangian_param} --- \eqref{eq:dual_opt_problem}, however, does not necessarily imply optimality of the original constrained co-design problem \eqref{eq:constrained_optimal_prob}.
To establish a relation between the value of the unparametrized primal problem and the value of the parametrized dual problem --- and ultimately establish conditions under which the introduction of policy parametrizations and the use of dual methods yield feasible policies capable of stabilizing the WCS --- we first utilize results from the theory of constrained reinforcement learning.
In the following, we assume that the co-design policy is parametrized by a \emph{near-universal} parametrization $\pi$. Such parametrizations are defined as those that can approximate any policy or function in $\Pi$ within a certain bound. We define this class formally as follows: 
\begin{definition}[\cite{eisen_learning_2018}] Let $\mathcal{P}(\mathcal{S})$ the space of probability measures on $( \Gamma, \mathcal{B}( \Gamma))$ parametrized by elements of $\mathcal{S}$, where  $\mathcal{B}( \Gamma)$ are the Borel sets of $ \Gamma$.
	Then, a parametrization $\pi_\theta$ is an $\epsilon$-universal parametrization of functions in $\mathcal{P}(\mathcal{S})$ if, for some $\epsilon > 0$, there exists for any $\pi \in \mathcal{P}(\mathcal{S})$ a parameter $\theta \in \mathbb{R}^r$ such that
	\begin{equation}
		\max_{s \in \mathcal{S}} \int_{ \Gamma} | \pi(a|s) - \pi_\theta(a|s) | da \leq \epsilon.
	\end{equation}
	\label{def:near_universal_param}
\end{definition}
This property is indeed satisfied by a number of commonly used parametrizations in reinforcement learning, such as deep neural networks and reproducing kernel Hilbert spaces \cite{paternain2019constrained}. The sub-optimality of dual solutions obtained in \eqref{eq:dual_opt_problem} with  $\epsilon$-universal parametrizations can be established in the following theorem, adapted from \cite{paternain2019constrained}. 

\begin{theorem} Assume that Slater's condition holds for \eqref{eq:constrained_optimal_prob}, i.e, there exists a policy $\pi^*(h,x)$ that renders the constraint in \eqref{eq:constrained_optimal_prob} strictly feasible.  Further assume that the one-step cost $r_t$ is bounded  by $B_r > 0$, and that the constraints $l^{(i)}$ are bounded by $B_{l_i} > 0$. Define $B_l = \max_{i = 1, \dots, r}B_{l_i}$ and let $\lambda_\epsilon^*$ be the solution to the dual problem associated to the perturbed primal problem \eqref{eq:primal_perturbed} for perturbation $\xi_i = B_r \epsilon / (1 - \gamma)$. Then, 
	\begin{equation}
		P^* \leq D_\theta^* \leq P^* + \left( B_{r} + \| \lambda_\epsilon^* \|_1 B_l \right) \frac{\epsilon}{1 - \gamma}, 
		\label{eq:dual_param_bound}
	\end{equation}
	where $P^*$ is the optimal value of \eqref{eq:constrained_optimal_prob}, and $D_\theta^*$ the value of the parametrized dual problem \eqref{eq:dual_opt_problem}. 
	\label{theorem:dual_param_bound}
\end{theorem}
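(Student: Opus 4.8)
The plan is to assemble \eqref{eq:dual_param_bound} from three ingredients: strong duality for the \emph{unparametrized} problem, a monotonicity relating the parametrized and unparametrized dual functions, and a perturbation argument that converts the $\epsilon$-universality of the parametrization into the explicit gap on the right-hand side.

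For the lower bound $P^* \le D_\theta^*$, let $d(\lambda) := \min_{\pi \in \mathcal{P}(\mathcal{S})} \mathcal{L}(\pi, \lambda)$ be the dual function of \eqref{eq:constrained_optimal_prob} and $D^* := \max_{\lambda \ge 0} d(\lambda)$ its dual value. Since the parametrized policies $\{\pi_\theta : \theta \in \reals^f\}$ form a subset of $\mathcal{P}(\mathcal{S})$, for each fixed $\lambda \ge 0$ the minimum over the smaller set can only be larger, i.e.\ $d_\theta(\lambda) \ge d(\lambda)$; maximizing over $\lambda \ge 0$ gives $D_\theta^* \ge D^*$. Since Slater's condition holds for \eqref{eq:constrained_optimal_prob} by hypothesis, the zero-duality-gap theorem for constrained reinforcement learning \cite{paternain2019constrained} yields $D^* = P^*$, and the lower bound follows.

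For the upper bound I would argue per dual variable. Fix $\lambda \ge 0$ and let $\pi_\lambda^\dagger \in \argmin_{\pi \in \mathcal{P}(\mathcal{S})} \mathcal{L}(\pi, \lambda)$, so $\mathcal{L}(\pi_\lambda^\dagger, \lambda) = d(\lambda) \le D^* = P^*$ by weak duality. By Definition \ref{def:near_universal_param} there is a $\theta$ with $\max_{s \in \mathcal{S}} \int_\Gamma |\pi_\lambda^\dagger(a|s) - \pi_\theta(a|s)|\, da \le \epsilon$. The penalized one-step cost $\tilde{r}_t$ in \eqref{eq:dual_cost_penalized} has magnitude at most $B_r + \|\lambda\|_1 B_l$; propagating the uniform total-variation bound $\epsilon$ through the $\gamma$-discounted trajectory gives $\mathcal{L}(\pi_\theta, \lambda) \le d(\lambda) + (B_r + \|\lambda\|_1 B_l)\,\epsilon/(1-\gamma)$, whence
\[
  d_\theta(\lambda) \;\le\; d(\lambda) + \bigl(B_r + \|\lambda\|_1 B_l\bigr)\frac{\epsilon}{1-\gamma} \qquad \text{for every } \lambda \ge 0 .
\]
Maximizing the right-hand side over $\lambda \ge 0$ is, up to the $B_r$ offset, the dual of a constraint-perturbed version of \eqref{eq:constrained_optimal_prob}: the linear-in-$\lambda$ term $\|\lambda\|_1 B_l\, \epsilon/(1-\gamma)$ is exactly what a perturbation of the constraints by $\xi_i = B_r\epsilon/(1-\gamma)$ contributes, so this maximum is attained at the perturbed optimal multiplier $\lambda_\epsilon^*$ of \eqref{eq:primal_perturbed}. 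Invoking once more that the perturbed problem satisfies Slater's condition and hence has zero duality gap, the maximum equals $P^* + (B_r + \|\lambda_\epsilon^*\|_1 B_l)\,\epsilon/(1-\gamma)$, which upper-bounds $D_\theta^* = \max_{\lambda \ge 0} d_\theta(\lambda)$ and establishes \eqref{eq:dual_param_bound}.

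The step I expect to be the main obstacle is the passage from $\epsilon$-closeness of the two policies in total variation \emph{at every state} to $O\!\bigl(\epsilon/(1-\gamma)\bigr)$ closeness of their infinite-horizon discounted Lagrangian values: one has to control how the per-step discrepancy propagates along the Markov chain while keeping the prefactor linear, rather than quadratic, in $(1-\gamma)^{-1}$, which is precisely where the bounds $B_r$ on $r_t$ and $B_l$ on the $l^{(i)}$ enter together with the discount. A secondary subtlety is making the identification of the maximizer of the $\lambda$-dependent bound with $\lambda_\epsilon^*$ rigorous --- i.e.\ reading the residual $\|\lambda\|_1$-term as a constraint perturbation of exactly the prescribed size --- where Slater's condition is what guarantees $\lambda_\epsilon^*$ is finite and hence that the bound in \eqref{eq:dual_param_bound} is meaningful. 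The remaining pieces --- weak duality, the domination $d_\theta \ge d$, and the zero-duality-gap input of \cite{paternain2019constrained} --- are either elementary or imported directly.
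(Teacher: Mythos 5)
Your proposal follows essentially the same route as the paper's proof: the lower bound via $d_\theta(\lambda)\ge d(\lambda)$ plus the zero-duality-gap result of \cite{paternain2019constrained}, and the upper bound by $\epsilon$-approximating the Lagrangian minimizer, bounding the Lagrangian discrepancy by $(B_r+\|\lambda\|_1 B_l)\epsilon/(1-\gamma)$ (the paper does this via occupation measures and Lemma 1 of \cite{paternain2019constrained}, which is exactly the ``main obstacle'' you flag), and then reading the residual $\|\lambda\|_1$-term as the dual of the perturbed problem \eqref{eq:primal_perturbed}. The only cosmetic difference is at the very end: the paper does not invoke zero duality gap for the perturbed problem but simply evaluates the perturbed dual's inner minimization at the feasible $\pi^*$ and uses $L_i(\pi^*)\le 0$, which yields the inequality (rather than the equality you assert) directly.
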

\begin{proof}
	See \ref{proof:dual_param_bound}.
\end{proof}

Theorem \ref{theorem:dual_param_bound} linearly relates the approximating capabilities of $\epsilon$-universal parametrizations  to the sub-optimality of the parametrized dual optimization. This characterization of the sub-optimality is important as parametrizing the resource allocation function makes the dimensionality of the learning problem finite, while working in the dual domain provides a framework for handling the constraints of the original co-design problem with standard RL methods. Thus, we may parametrize and work in the dual domain so long as we utilize a sufficiently dense parametrization. Note that the quadratic objective in \eqref{eq:constrained_optimal_prob} will be bounded if we restrict the state and action spaces. Let, for example, $x^{(i)} \in [-x_{\max}, x_{\max} ]^{mp}$, $\bar{x} = [x_{\max}]^{mp}$, $u^{(i)} \in [-u_{\max}, u_{\max}]^{mq}$, $\bar{u} = [u_{\max}]^{mp}$. Then, the bound holds with $B_r = ( \bar{x}^\trp Q \bar{x} + \bar{u}^\trp R \bar{u})$.  

As previously stated, the dual function is convex in terms of the dual parameter $\lambda$ and can hence be solved via the dual-descent procedure, as described in Algorithm \ref{alg:dual_descent}. Dual-descent iteratively updates the co-design policy parameter $\theta$ through the solving of the primal \emph{unconstrained} MDP formed by the Lagrangian in \eqref{eq_dual} using traditional RL methods, followed by a single gradient ascent step of the dual parameter. Conditions under which dual descent converges  can then be established in the following theorem.

\begin{algorithm}[t]
	\caption{Dual-descent}
	\label{alg:dual_descent}
	\DontPrintSemicolon
	\BlankLine
	Initialize $\theta_0, \lambda_0$ \;
	\For{$i$ = $0, 1, \dots$}{
		Solve primal problem: 
		$\theta_{i + 1} \leftarrow \arg \min \mathcal{L}(\theta_i, \lambda_i) $ \;
		
		Update dual variable: $\lambda_{i + 1} \leftarrow [\lambda + \beta_\lambda \nabla_\lambda \mathcal{L}(\theta_i, \lambda_i)]_+ $ \;
	}
\end{algorithm}

\begin{theorem}
	\label{theo:dual_descent_conv}
	Let $\mathcal{L}(\theta, \lambda)$, with $\lambda \in \mathbb{R}^r$, be the Lagrangian associated to the parametrized problem \eqref{eq:constrained_optimal_prob_param_single_agent} and assume that the minimization of $\mathcal{L}(\theta, \lambda)$ in Algorithm \ref{alg:dual_descent} can be solved exactly. Then, under the Assumptions of Theorem \ref{theorem:dual_param_bound}, dual descent with step size $\beta_\lambda$ converges to a neighborhood of $P^*$ for any $\epsilon > 0$.
\end{theorem}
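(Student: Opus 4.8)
The plan is to recognize Algorithm~\ref{alg:dual_descent} as projected supergradient ascent on the concave dual function $d_\theta(\cdot)$ and then compose the resulting convergence guarantee with the parametrization bound of Theorem~\ref{theorem:dual_param_bound}.

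\emph{Step 1: identify the update as supergradient ascent.} Since the inner minimization is solved exactly, write $\theta_{i+1}\in\argmin_\theta\mathcal{L}(\theta,\lambda_i)$ and note that $\nabla_\lambda\mathcal{L}(\theta_{i+1},\lambda_i)=\mathbb{E}^{\pi(\cdot;\theta_{i+1})}_{x_0}\big[\sum_{t=0}^\infty\gamma^t l(x_t,h_t,\alpha_t)\big]=:g_i$ is precisely the discounted constraint slack of the minimizing policy. Because $\mathcal{L}(\theta,\cdot)$ is affine in $\lambda$ for fixed $\theta$ and $d_\theta(\lambda)=\min_\theta\mathcal{L}(\theta,\lambda)$, a standard envelope (Danskin) argument shows $g_i$ is a supergradient of $d_\theta$ at $\lambda_i$, so the recursion $\lambda_{i+1}=[\lambda_i+\beta_\lambda g_i]_+$ is exactly projected supergradient ascent for $D_\theta^*=\max_{\lambda\ge 0}d_\theta(\lambda)$.

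\emph{Step 2: bound the supergradients and apply the textbook estimate.} The hypothesis that each $l^{(i)}$ is bounded by $B_{l_i}\le B_l$ gives $\|g_i\|_2\le \sqrt{r}\,B_l/(1-\gamma)=:G$ uniformly in $i$; Slater's condition (assumed in Theorem~\ref{theorem:dual_param_bound}) guarantees the dual optimum is attained at some $\lambda_\theta^*$ with $\|\lambda_\theta^*\|_2\le R_\lambda<\infty$. The classical projected-subgradient inequality then yields, for the running-best iterate,
\[
D_\theta^*-\max_{k\le i}d_\theta(\lambda_k)\ \le\ \frac{R_\lambda^2+\|\lambda_0\|_2^2}{2\beta_\lambda(i+1)}+\frac{\beta_\lambda G^2}{2},
\]
hence $\liminf_i d_\theta(\lambda_i)\ \ge\ D_\theta^*-\beta_\lambda G^2/2$: the dual value reaches an $O(\beta_\lambda)$-neighborhood of $D_\theta^*$.

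\emph{Step 3: compose with Theorem~\ref{theorem:dual_param_bound}.} For an $\epsilon$-universal parametrization, that theorem gives $P^*\le D_\theta^*\le P^*+(B_r+\|\lambda_\epsilon^*\|_1 B_l)\,\epsilon/(1-\gamma)$, where $B_r$ is finite once the state and action spaces are bounded as in the remark following the theorem. Since $d_\theta(\lambda_i)=\mathcal{L}(\theta_{i+1},\lambda_i)\le D_\theta^*$ and, by Step~2, is eventually within $\beta_\lambda G^2/2$ of $D_\theta^*$, chaining the inequalities gives
\[
\big|\,\mathcal{L}(\theta_{i+1},\lambda_i)-P^*\,\big|\ \le\ \frac{\beta_\lambda G^2}{2}+\big(B_r+\|\lambda_\epsilon^*\|_1 B_l\big)\frac{\epsilon}{1-\gamma}+o(1),
\]
which is the asserted convergence to a neighborhood of $P^*$ whose radius is controlled jointly by the step size $\beta_\lambda$ and the density $\epsilon$ of the parametrization. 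The subgradient-ascent estimate in Step~2 is routine; the delicate points are (i) certifying that $g_i$ is a genuine supergradient of $d_\theta$ despite the nonconvexity and possible non-uniqueness of the inner minimizer, which relies on the affine-in-$\lambda$ structure and exactness of the inner solve; and (ii) extracting from Slater's condition a bound $R_\lambda$ on the dual optimizer and tracking how $\|\lambda_\epsilon^*\|_1$ (the optimizer of the perturbed dual of Theorem~\ref{theorem:dual_param_bound}) propagates, so that the final neighborhood bound is uniform in $i$ and collapses as $\beta_\lambda,\epsilon\to 0$.
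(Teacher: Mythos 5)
Your proposal is correct and follows exactly the route the paper intends: the paper's own ``proof'' of this theorem is only a citation to the standard dual/subgradient convergence result (Theorem 5 of the Ribeiro 2012 reference), and your Steps 1--2 are precisely that argument (concavity of $d_\theta$ as a pointwise minimum of affine functions, Danskin supergradients from the exact inner solve, bounded supergradients from the bounds on $l^{(i)}$, bounded dual set from Slater, and the textbook projected-supergradient estimate giving an $O(\beta_\lambda)$ neighborhood of $D_\theta^*$), while Step 3 composes this with Theorem \ref{theorem:dual_param_bound} exactly as the paper's surrounding discussion suggests. The only cosmetic issues are the index mismatch between $\theta_i$ and $\theta_{i+1}$ in the dual update (inherited from the algorithm's own notation) and the constant in the initialization term of the subgradient bound, neither of which affects the conclusion.
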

\begin{proof}
	Convergence of dual descent is a well-established result in the literature, see, e.g., \cite[Theorem 5]{Ribeiro2012}.
\end{proof}

Theorem \ref{theo:dual_descent_conv} establishes that the dual descent in Algorithm \ref{alg:dual_descent} is guaranteed to converge to a point close to the solution of the \emph{unparameterized} co-design problem in \eqref{eq:constrained_optimal_prob}. This follows from the convexity of the dual problem in \eqref{eq_dual} and the closeness of parameterized and unparameterized solutions established in Theorem \ref{theorem:dual_param_bound}.  While the implementation of dual descent indeed requires the solving of an unconstrained MDP as in Step 3, this result establishes that the inclusion of long-term constraints adds no \emph{additional} optimization complexity relative to traditional unconstrained RL.
Thus, we may utilize RL as a framework for handling constrained resource allocation and control design in WCSs via the addition of a convex gradient step in the dual parameter.
The sub-optimality incurred through the parametrized dual formulation, however, poses the question of whether a near-optimal policy parametrized by an $\epsilon$-universal parametrization is still capable of stabilizing the wireless control system, if so does the optimal, non-parametrized solution to the constrained co-design problem. It turns out that the optimality loss will affect the probability with which the trajectories of the wireless control system will stay bounded in a region around the equilibrium point, as established in Theorem \ref{theo:approx_sol_stab}. In the following, $\lambda_{\min}(Q)$ stands for the minimum eigenvalue of $Q$, and the value function associated to the optimal codesign problem \eqref{eq:constrained_optimal_prob} is given by
\begin{equation}
	\begin{aligned}
		V(x) &\coloneqq \mathbb{E}^{\pi^*(\cdot)}_{x}  \left[\sum_{t = 0}^{\infty}\gamma^t \left( x_t^\trp Q x_t  + u_t^\trp R u_t \right) \right]. \\
	\end{aligned}
\end{equation}
Similarly, $V_{\theta^*}$ is the value function under a policy $\pi^*_\theta$ parametrized by an $\epsilon$-universal parametrization.

\begin{theorem}
	\label{theo:approx_sol_stab}  Assume that $x^{(i)} = 0, i = 1, \dots, m$ is an equilibrium point for the wireless control system described in equations  \eqref{eq:ind_plant_dyn}-\eqref{eq:plant_switched_prob}, that is, $x^{(i)}_t = 0$ for all $t \geq t_0$ if $x_{t_0}^{(i)}=0$ and $u_t^{(i)} = 0$, and that $V(\cdot)$, $V_{\theta^*}(\cdot)$ are continuous with $V(0)= 0$, $V_{\theta^*}(0) = 0$. Assume also that dual descent converges to a local minimum satisfying KKT conditions and that the optimal non-parametrized solution of \eqref{eq:constrained_optimal_prob} stabilizes the wireless control system. In particular, assume that the value function associated to the optimal, non-parametrized policy satisfies
	\begin{equation}
		\mathbb{E} \left[ V(x_{t + 1})\right ] - V(x_t)  \leq 0 
		\label{eq:stab_vf_lyap_cond}
	\end{equation}
	for any $x_t \in D_r \coloneqq \{ x \in \mathbb{R}^{mp} : \|x\| < r \} \subset \mathbb{R}^{mp}$, $r > 0$. Then, under  an $\epsilon$-universal parametrization, 
	there exists $\delta(\zeta, \nu ,t_0) > 0$ for every $\zeta > 0$ and $\nu > 0$ such that
	\begin{equation}
		P \left\{ \|x_t\| < \nu \right\} \geq 1 - \left(\zeta + \frac{ B_{r} + \| \lambda_\epsilon^* \|_1 B_l  }{(1 - \gamma) \lambda_{\min}(Q)\nu^2} \epsilon \right), t \geq t_0,
		\label{eq:stab_prob}
	\end{equation} 
	when $\|x_0\| < \delta$. 
	
\end{theorem}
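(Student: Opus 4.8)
The plan is to use the optimal value function $V$ as a stochastic Lyapunov function, run a supermartingale argument under the optimal policy $\pi^*$, and then transfer the resulting stability estimate to the trajectory of the near-optimal parametrized policy $\pi^*_\theta$ by paying the sub-optimality gap of Theorem~\ref{theorem:dual_param_bound}. First I would record two elementary facts about $V$: being an expectation of a sum of nonnegative stage costs it is nonnegative, and keeping only the $t=0$ term gives the quadratic lower bound $V(x) \ge x^\trp Q x \ge \lambda_{\min}(Q)\|x\|^2$. Together with $V(0)=0$ and the assumed continuity, this makes $V$ a valid Lyapunov candidate and, crucially, gives the inclusion $\{\|x_t\|\ge\nu\}\subseteq\{V(x_t)\ge\lambda_{\min}(Q)\nu^2\}$ for every $\nu>0$.

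Next I would exploit the drift condition \eqref{eq:stab_vf_lyap_cond}: under $\pi^*$ it says that $V(x_t)$ is a supermartingale while the state remains in $D_r$. Stopping at the first exit time $\tau$ from $D_r$ and iterating the conditional inequality yields $\mathbb{E}^{\pi^*}[V(x_{t\wedge\tau})]\le V(x_0)$ for all $t$. By continuity of $V$ at the origin, given $\zeta>0$ and $\nu\le r$ one can choose $\delta_1>0$ so that $\|x_0\|<\delta_1$ forces $V(x_0)<\zeta\,\lambda_{\min}(Q)\nu^2$; this is the source of the $\zeta$ term in \eqref{eq:stab_prob}.

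The core step is to pass from $\pi^*$ to the trajectory generated by $\pi^*_\theta$. For a fixed $t\ge t_0$, Markov's inequality applied to $V(x_t)\ge 0$ gives
\[
  P\{\|x_t\|\ge\nu\}\ \le\ \frac{\mathbb{E}^{\pi^*_\theta}[V(x_t)]}{\lambda_{\min}(Q)\,\nu^2}.
\]
I would then split $\mathbb{E}^{\pi^*_\theta}[V(x_t)]$ into the contribution of $\pi^*$, bounded by $V(x_0)$ via the supermartingale estimate above, plus a residual coming from the parametrization. Using the dynamic-programming/Markov structure, this residual is controlled, from each intermediate state, by the optimality gap $\big(B_r+\|\lambda_\epsilon^*\|_1 B_l\big)\epsilon/(1-\gamma)$ of Theorem~\ref{theorem:dual_param_bound}. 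Substituting and dividing by $\lambda_{\min}(Q)\nu^2$ reproduces exactly the right-hand side of \eqref{eq:stab_prob}; taking the complement and setting $\delta=\delta(\zeta,\nu,t_0)$ equal to $\delta_1$ (further shrunk if needed so all the preceding bounds hold up to time $t_0$) completes the proof.

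The main obstacle is making that residual bound rigorous. The parametrized policy is only guaranteed near-optimal in the aggregate, initial-state-averaged sense of Theorem~\ref{theorem:dual_param_bound}, and only known to satisfy KKT conditions after Algorithm~\ref{alg:dual_descent} converges; one therefore has to upgrade that statement to a conditional bound valid from each state $x_t$ visited along the $\pi^*_\theta$ trajectory and then integrate against the law of $x_t$. Additional care is required because \eqref{eq:stab_vf_lyap_cond} is only a local drift condition on $D_r$, so the supermartingale and Markov steps must be carried out for the process stopped at the exit from $D_r$ --- this is what forces $\nu$ to be small and what makes $\delta$ depend on the horizon $t_0$ --- and one must check that the exit probability is itself absorbed into the $\zeta$ and $\epsilon$ terms rather than generating a separate error. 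Finally, identifying the KKT limit of the dual iterates with $\lambda_\epsilon^*$ and the perturbed primal underlying Theorem~\ref{theorem:dual_param_bound} is the remaining bookkeeping.
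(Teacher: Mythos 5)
Your skeleton matches the paper's: $V$ as a stochastic Lyapunov function with the quadratic lower bound $\varphi(\|x\|)=\lambda_{\min}(Q)\|x\|^2$, a supermartingale estimate for the process stopped at the first exit from a ball, continuity of $V$ at the origin to extract the $\zeta$ term, and the sub-optimality gap of Theorem~\ref{theorem:dual_param_bound} to produce the $\epsilon$ term. The divergence --- and the gap --- is in how the parametrization error enters. The paper applies the drift condition \eqref{eq:stab_vf_lyap_cond} \emph{directly along the trajectory $\tilde{x}_t$ generated by $\pi^*_\theta$}, obtaining $\mathbb{E}[V(\tilde{x}_{\mu\wedge t})]\le V(x_0)$ with no residual at all, and then pays the gap exactly once, by invoking Theorem~\ref{theorem:dual_param_bound} in the pointwise form $\mathbb{E}[V_{\theta^*}(\tilde{x}_{\mu\wedge t})]\le\mathbb{E}[V(\tilde{x}_{\mu\wedge t})]+\kappa\epsilon$, where $\kappa$ is the coefficient of $\epsilon$ in \eqref{eq:dual_param_bound}; the first-exit-time inequality $\varphi(\nu)\,P\{\mu\le t\}\le\mathbb{E}[V_{\theta^*}(\tilde{x}_\mu)]$ then yields the bound uniformly in $t$ after letting $t\to\infty$. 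Your plan instead runs the supermartingale under $\pi^*$ and proposes to reconstruct $\mathbb{E}^{\pi^*_\theta}[V(x_t)]$ by adding a residual ``from each intermediate state.'' That step, which you yourself flag as the main obstacle, does not work as described: Theorem~\ref{theorem:dual_param_bound} compares aggregate discounted values from the initial distribution, not one-step transition kernels, so it supplies no per-state, per-step discrepancy to integrate against the law of $x_t$; and even granting such a bound, a residual incurred at every step would accumulate to something growing with $t$ (or at best another factor of $1/(1-\gamma)$), not the single $\kappa\epsilon$ term appearing in \eqref{eq:stab_prob}.

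The repair is the paper's reading of the hypotheses: \eqref{eq:stab_vf_lyap_cond} is treated as a state-wise condition valid for any state in $D_r$, so it can be applied along the $\pi^*_\theta$-trajectory without correction, and the KKT assumption on the dual-descent limit is what licenses the single application of Theorem~\ref{theorem:dual_param_bound} to pass from $V$ to $V_{\theta^*}$ at the stopped state. A secondary point: your Markov inequality at a fixed $t$ only yields a per-time bound, whereas the paper's exit-time argument (following its stochastic-stability reference) gives the conclusion uniformly over $t\ge t_0$, which is the intended reading of \eqref{eq:stab_prob}.
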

\begin{proof}
	See \ref{proof:approx_sol_stab}.
\end{proof}

Theorem \ref{theo:approx_sol_stab} shows that if the optimal non-parametrized solution of the co-design problem \eqref{eq:constrained_optimal_prob} stabilizes the WCS, so will the solution of the dual parametrized problem in \eqref{eq:dual_opt_problem}, with the optimality loss of  an $\epsilon$-universal parametrization affecting the probability with which the trajectories of the WCS are contained to a region around the equilibrium point. Thus, the ``density'' of the parametrization, as captured by $\epsilon$, can be directly related to the degree to which the co-design policy stabilizes the system. 

\begin{remark}
	As shown in Lemma 1 in \cite{eisen_learning_2018}, the entire class of deep neural networks is an $\epsilon$-universal parametrization for all $\epsilon > 0$.  Then, if in addition to the conditions outlined in Theorem \ref{theo:approx_sol_stab}, we further assume the dual variable associated to the perturbed problem, $\| \lambda_\epsilon^* \|_1$, to be finite,  the bound between the value of the primal problem, and the value of the parametrized dual problem in Theorem \ref{theorem:dual_param_bound} would hold for all $\epsilon > 0$, since $B_r$ and $B_l$ are both finite. Under the assumptions of the theorem, making $\epsilon$ arbitrarily small --- as in the case of an arbitrarily large neural network \cite{hornik_1989} --- would imply that equation \eqref{eq:stab_prob} holds with 
	\begin{equation}
		P \left\{ \|x_t\| < \nu \right\} \geq 1 - \zeta, t \geq t_0,
		\label{eq:cor1_stab_prob}
	\end{equation}
	meaning that the parametrized solution in this case renders the system stochastically stable in the sense of \cite[Definition 1]{li_stability_2013}.
\end{remark}

We point out that the dual-descent method detailed in Algorithm \ref{alg:dual_descent} can however be a computationally prohibitive procedure, as it requires an inner-loop RL optimization to be performed for each iterative value of $\lambda$.  We proceed in the following section to derive a more computationally practical procedure that combines traditional, model-free RL methods with dual descent to find solutions to the parametrized dual problem. 

\section{Primal-Dual Reinforcement Learning}\label{sec_learning}
We present a practical iterative method termed \emph{primal-dual reinforcement learning (RL)}, which alternates RL updates with a dual parameter update in a single loop procedure. As we detail in this section, the primal-dual RL approach combines the model-free capabilities of RL with the constraint satisfaction capabilities of dual methods. 

Model-free reinforcement learning methods can be classified into value-based and policy-based methods \cite{sutton_reinforcement_learning}. Value-based methods approximate the value or state-value function and then select actions that minimize the estimated long-term cost.
Policy-based methods, on the other hand, directly learn a policy $\pi(\cdot; \theta)$ by updating the policy parameter $\theta$ in order to minimize sampled costs. Actor-critic algorithms, in turn, combine features of policy-based and value-based reinforcement learning to reduce variance and improve convergence: the agent learns both the policy, called the \emph{actor}, and the estimated long-term reward, called the \emph{critic}. In the following we utilize actor-critic methods in the proposed primal-dual algorithm.

 For $\lambda$ fixed, the Lagrangian $\mathcal{L}(\theta, \lambda)$ reduces to an unconstrained MDP, and the parameters of the resource allocation or control policy can be updated via approximate gradient descent 
\begin{equation}
\theta_{j + 1} = \theta_j - \beta  \hat{\nabla}_\theta \mathcal{L}(\theta, \lambda),
\label{eq:approximate_gradient_descent}
\end{equation}
where  $\beta$ is the learning rate and $\hat{\nabla}_\theta \mathcal{L}(\theta, \lambda)$ an estimate of the gradient of $\mathcal{L}(\theta, \lambda)$ with respect to $\theta$ \cite[ch. 13]{sutton_reinforcement_learning}. 
The critic, on the other hand, approximates the value function with a function $\rho(\cdot; \eta)$ parametrized by $\eta$. The parameter $\eta$ is optimized to approximate the Lagrangian function in terms of mean squared error, i.e.,
\begin{equation}
\eta^* := \arg \min_{\eta} \mathbb{E}^{\pi(\cdot; \theta)}_{x_0} (\mathcal{L} (\theta, \lambda^*) - \rho(s; \eta))^2.
\label{eq:value_parameterization}
\end{equation}

Here, in particular, we adapt the Advantage Actor-Critic (A2C) algorithm \cite{mnih_a3c_2016}, which relies on synchronous execution of multiple agents in parallel realizations of the environment using model free estimates of the gradients in \eqref{eq:approximate_gradient_descent} and \eqref{eq:value_parameterization}, to the primal-dual setting. 
   At each iteration, the policy is updated according to
 \begin{equation}
 \theta_{j + 1} = \theta_j - \beta_\theta \sum_{t = 0}^T \nabla \log \pi (a_t | s_t; \theta_t) A(s_t, a_t; \theta, \eta),
 \label{eq:policy_update}
 \end{equation}
 with $A(s_t, a_t; \theta, \eta)$ an estimate of the advantage function, 
defined as the difference between $Q^{\pi(\cdot; \theta)}(s,a)$, the 
 expected return starting from state $s$ given action $a$, and $J^{\pi(\cdot; \theta)}(s)$, the expected return from state $s$ given the current policy,
\begin{equation}
A^{\pi(\cdot; \theta)}(s,a) = Q^{\pi(\cdot; \theta)}(s,a) - J^{\pi(\cdot; \theta)}(s).
\label{eq:advantage}
\end{equation}
Given the critic learning rate $\beta_\eta$, the value function approximator is updated by
\begin{equation}
\eta_{j} = \eta_j - \beta_\eta \sum_{t = 0}^T \frac{\partial}{\partial \eta}  \left(\rho (s; \eta)  - \tilde{R}_t \right)^2
\label{eq:value_update}
\end{equation}
so as to minimize the squared difference between the estimated value function and the sampled cost-to-go, 
\begin{equation}
	\tilde{R}_t  = 
	\sum_{\tau = t}^{ \bar{t} - 1}  \gamma^{\tau - t} \tilde{r}_\tau + \gamma^{t_{\max}} R_{\bar{t}}, t = \bar{t} - 1, \dots, \bar{t} - t_{\max},  
	\label{eq:cost_to_go}\end{equation}
with $\bar{t} = k t_{\max}$, $k \geq 1$, and $t_{\max}$ the interval between updates. $ R_{\bar{t}} = 0$, if $\bar{t} = T$, and $ R_{\bar{t} } = \rho (s_{\bar{t} }; \eta)$ otherwise \cite{mnih_a3c_2016}.
The dual variable $\lambda$ can be updated via approximate gradient ascent, 
\begin{equation}
\begin{aligned}
\lambda_{i + 1} &= [ \lambda_i + \beta_{\lambda} \hat{\nabla}_{\lambda} \mathcal{L}(\theta, \lambda) ]_{+}, \\
\hat{\nabla}_{\lambda} \mathcal{L}(\theta, \lambda) &= \mathbb{E}^{\pi(\cdot; \theta)} \left[ \sum_{t = 0}^T l(x, h, \alpha) \right].
\end{aligned}
\label{eq:lambda_update}
\end{equation}

\begin{algorithm}[t]
     \caption{Primal-dual RL 
      }
     \label{alg:deep_pg}
  \DontPrintSemicolon
   \KwResult{Resource allocation and control policies.}
      \BlankLine
   \For{$i$ = $1, \dots, $}{
   Generate episodes; \;
     \For{ t = 1, \dots, T}{
Collect samples from $N$ threads; \;
    \If{$t = k t_{\max}, k = 1, 2, \dots$}{ 
   	Compute cost-to-go \eqref{eq:cost_to_go} \;
   	Compute advantage estimate \eqref{eq:advantage};\;
	Update policy \eqref{eq:policy_update};\;
   	Update value function estimate \eqref{eq:value_update}; \;
   		}
	}
	Update dual variable \eqref{eq:lambda_update}; \;
   }
  \end{algorithm}

 The resulting procedure for designing model-free resource allocation and control policies in WCSs is presented in Algorithm \ref{alg:deep_pg}.
The implementation considers $N$ simultaneous realizations of the WCS per episode, with each realization having a simulation horizon $T$.  Each realization starts from initial states $s_ 0 = [x_0; h_0]$ with $x_0$ sampled from a standard normal distribution and $h_0$ sampled from $\chi(h)$, a probability distribution representing the distribution of the wireless fading conditions.
 At each time step, allocation and control decisions for all users are sampled from the corresponding policy $\pi(\cdot; \theta)$. 
  The plants' states evolve according to \eqref{eq:ind_plant_dyn} and fading conditions are sampled again from $\chi(h)$. After $t_{\max}$ time steps, we compute the cost-to-go $\tilde{R}_t$ and estimate the advantage function $A_t$. The advantage function  and the cost-to-go from $N$ simultaneous realizations are aggregated to estimate the gradients of the policy and value networks, and the corresponding parameters are updated accordingly. 
At the end of each episode, an estimate of the constraint violation, $\hat{\nabla}_\lambda \mathcal{L}(\theta, \lambda)$, is used to update the dual variable. Noisy observations are used during training to account for possible measurement errors.  The proposed primal-dual approach can easily be extended to other RL algorithms that rely primarily on policy parametrizations by changing the primal update in Algorithm \ref{alg:deep_pg} accordingly. We further point out that the training procedure in Algorithm \ref{alg:deep_pg} is performed offline, and as such the convergence time is not critical.  If the underlying channel distribution model changes over long periods of time, fine tuning the policy may be necessary in practice, but that can be done online. The inference time of the resource allocation and control decisions, on the other hand, is minor as it consists in simply using the previously trained policy to compute control and allocation decisions based on the current fading and control states. We assume the inference time to be shorter than the coherence time of the channel. In practical scenarios, the co-design policy can first be trained using simulations based on data collected from practical systems, and then deployed in a physical system so that the co-design policy can be fine tuned. At each time step, the codesign agent should collect control and fading states, sample an action from its current policy, execute that action, and then observe the corresponding cost --- namely, the control effort and the weighted norm of control states in \eqref{eq:constrained_optimal_prob}. The constraint violation is measured by comparing the resource used to send the control signals to the plants against the available power budget. After a pre-determined number of time steps, the approximate policy, critic and dual gradients can be computed from the collected information, and its parameters can be updated accordingly.

Utilizing the common framework of \emph{deep} RL, we parametrize the resource allocation and control policies with deep neural networks --- known universal approximators \cite{hornik_1989}.  Both the actor network $\pi(\cdot; \theta)$ and critic network $\rho(\cdot; \eta)$ take as inputs the state $s_t$ defined in \eqref{eq:joint_design_state_RL} and output parameters that characterize $2m$ independent Gaussian policies with means $\mu_\alpha^{(1)}, \dots,\mu_\alpha^{(m)}$, $\mu_u^{(1)}, \dots,\mu_u^{(m)}$ 
  from which resource allocation decisions $\alpha^{(1)},\hdots,  \alpha^{(m)}$ and control actions $u^{(1)}, \hdots, u^{(m)}$ are drawn. These means are collected in the output vector $y(x,h)$ which is computed with a standard neural network,
  \begin{equation}
y^{(k)}(x,h) = \phi_L \left(C_L  \phi_{L-1} \left( \dots \phi_1 \left( C_1 s + b_1 \right) \right) + b_L  \right) 
\label{eq:nn_output}
\end{equation}
 with the matrix $C_l$ and vector $b_l$ representing the weights of the linear combination at each hidden layer $l = 1, \dots, L$, and $\phi(\cdot)$ a pointwise nonlinearity. 
The learnable parameters here correspond to the weights of the linear combinations 
 and standard deviations $\sigma_\alpha^{(1)}, \dots,\sigma_\alpha^{(m)}$, $\sigma_u^{(1)}, \dots,\sigma_u^{(m)}$ for the Gaussian policy distribution.

 \begin{remark}\label{remark_constraints}
The generic structure of neural networks further allows us to design parametrizations that adhere to the resource allocation structure $\mathcal{A}$ and control constraints $\mathcal{U}$ in \eqref{eq:constrained_optimal_prob}.  In particular, the output layer activation $\phi_L$ can in most cases be constructed to embed such structure on the outputs. For example, resource allocation decisions $\alpha$ satisfying $\sum_{i} \alpha^{(i)} \leq \alpha_{\text{max}}$ can be addressed with a softmax-type normalization of the outputs. Likewise, interval constraints such as $\mathcal{U} = [u_{\min}, u_{\max}]^m$ can be addressed via a sigmoid output layer. 
 \end{remark}


\begin{figure*}[htb]
    \centering
    \begin{minipage}{0.33\textwidth}
        \centering
       \includegraphics[width=\linewidth]{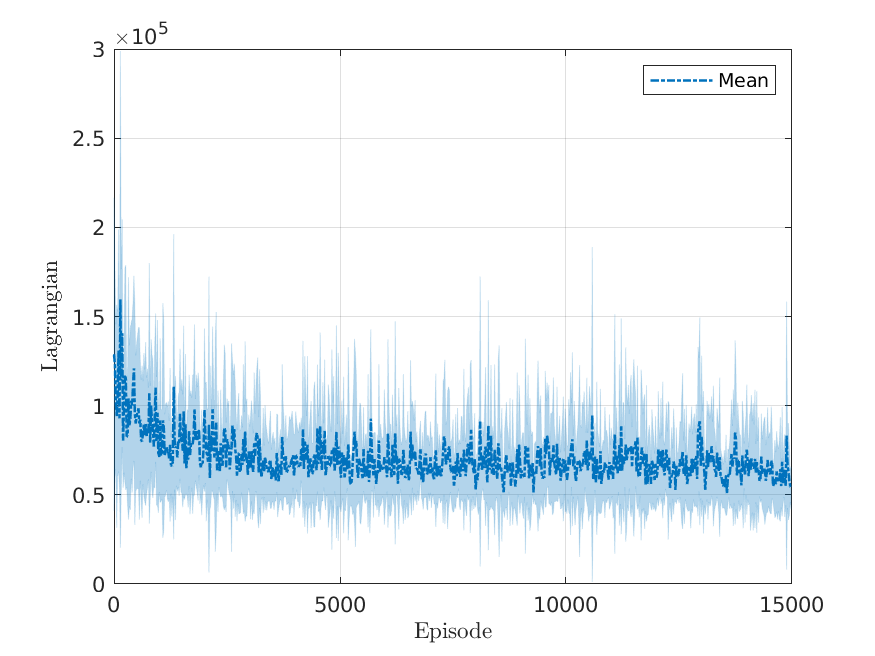}
    \end{minipage} 
        \hfill %
        \begin{minipage}{0.33\textwidth}
        \centering
      \includegraphics[width=\linewidth]{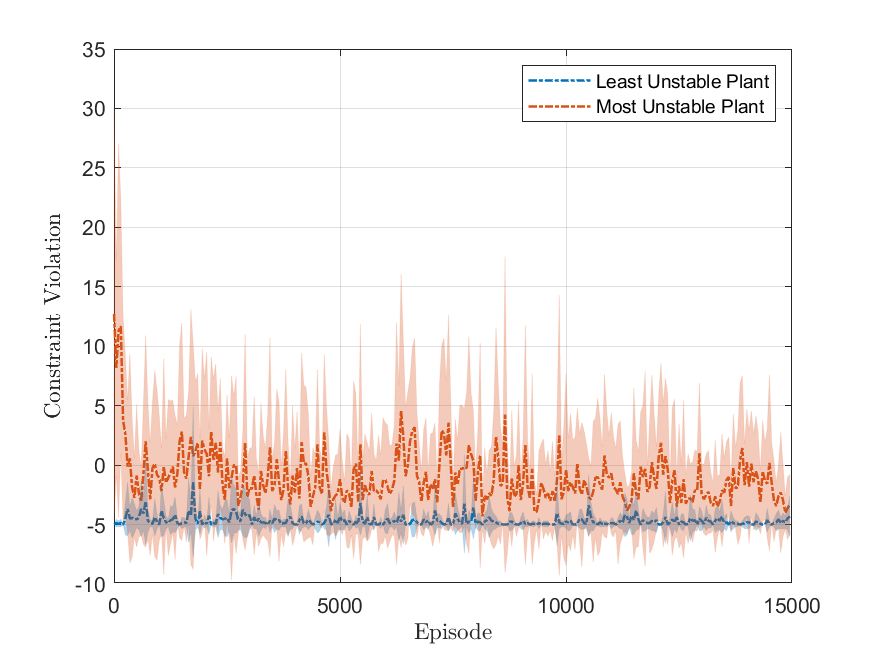}
    \end{minipage}%
    \hfill %
        \begin{minipage}{0.33\textwidth}
        \centering
      \includegraphics[width=\linewidth]{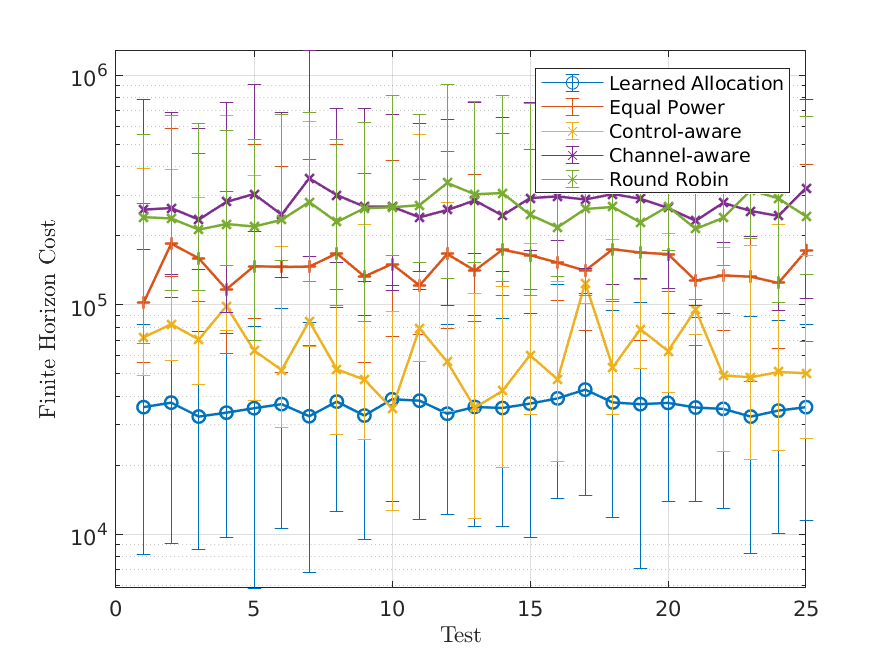}
    \end{minipage}%
    \caption{Linear system: (left) Lagrangian  and  (center) constraint violation per training episode. (Right) Cost comparison between the learned allocation policy (blue), prioritizing unstable plants (yellow), dividing power equally (orange), round robin (green) and prioritizing channel conditions (purple).
     }
    \label{fig:linear_system_safety_sims}
\end{figure*}

\section{Numerical Experiments}
\label{sec:num_exp}

We now present numerical experiments to illustrate the use of the proposed model-free approach to design control-aware resource allocation and channel-aware control policies in WCSs. 
We created custom environments on OpenAI Gym \cite{brockman2016openai} and customized a standard implementation of A2C from Stable Baselines  \cite{stable-baselines} to account for the dual learning step in Algorithm \ref{alg:deep_pg}. Unless indicated otherwise, policies were parametrized with a standard multilayer neural network made up by two hidden layers with 64 hidden units each and trained over $N = 16$ simultaneous realizations of the WCS, with a discount factor $\gamma = 0.99$. 
 Parameters were tuned heuristically.
The AP and the RC have access to noisy estimates of the control states and channel conditions, cf. equation \eqref{eq:fading_state_estimate}.
Initial states of the control plants are sampled from a standard normal distribution. Plants are randomly dispersed around the AP collocated with the RC. The distance $d_i$ between plant $i$ and the RC determines the slow fading component, $h_s = d_i^{-p_l}$, with $p_l$ the path loss. The fast fading component $h_f^{(i)}$, on the other hand, follows a Rayleigh distribution $\chi(h)$ with parameter $\lambda_h$, leading to
$
h^{(i)} = h_s^{(i)} \times h_f^{(i)}, h_f^{(i)} \sim \chi(h).
$  The probability of successfully receiving an information packet is given by 
\begin{equation}
v(\varsigma(\alpha, h)) = 1 - \exp (- \varsigma(\alpha, h)).
\end{equation}

\paragraph{Power Allocation: Instantaneous Power and Region Constraints}
Here the agent must learn a resource allocation policy to distribute transmission power among $m = 10$ plants sharing a wireless communication network. In this first scenario we assume that the control action is computed by a standard linear quadratic regulator (LQR controller). The allocation policy must satisfy an instantaneous resource constraint,
\[ 
\sum_{i = 1}^{m} \alpha^{(i)}_t(x_t, h_t) \leq  p_{\max}
\] 
on the power utilized to send the control signals back to the plants at each time instant, with $p_{\max} = m$. The objective here is to minimize a finite horizon quadratic cost \eqref{eq:constrained_optimal_prob} with $Q = \mathbb{I}$ and $R = \mathbb{I}$.
 The internal state of each plant \eqref{eq:plant_switched_prob} evolves according to 
\begin{equation}
x_{t + 1}^{(i)} = A^{(i)}x^{(i)}_t + B^{(i)} u_t^{(i)} + w_t,
\label{eq:linear_plants}
\end{equation}
where
\begin{equation}
A^{(i)} = \begin{bmatrix}
-a^{(i)} & 0.2 & 0.2 \\ 0 & -a^{(i)} & 0.2 \\ 0 & 0 & -a^{(i)} 
\end{bmatrix}; \quad
B^{(i)} = \mathbb{I},
\label{eq:matrices_value_power_allocation}
\end{equation}
with $\mathbb{I}$ standing for the identity matrix and $a^{(i)}$ sampled from a uniform distribution, $a^{(i)} \sim \mathcal{U} [1.05, 1.15]$. 
The plants are open-loop unstable but controllable, that is, under ideal communication conditions, the control input can drive the system back to equilibrium. Furthermore, all plants have different dynamics in this scenario, stressing the heterogeneity of the WCS and the different communication needs of each plant. Here, $W^{(o)} = \mathbb{I}$, $p_l = 2$ and $\lambda_h = 1$. 
Besides distributing power among the control signals to be sent to the plants, the agent must also keep each plant in a region $\mathcal{R}^{(i)} = [-15, 15]^p$ with $T_i = 5$, as described in Example \ref{ex_fdma}.
 The plants are randomly placed in a $[-m/4, m/4]^2$ area around the AP. 

The training phase considers episodes of horizon $T = 100$ and learning rates $\beta_\theta = \beta_\eta = 5 \times 10^{-4}, \beta_\lambda = 1 \times 10^{-4}$. To facilitate training, we first train the resource allocation policy to imitate a heuristic that gives more power to plants away from the equilibrium point.  Figure \ref{fig:linear_system_safety_sims} presents the mean and one standard deviation of (left) the Lagrangian and (center) the constraint violation per episode for the least (blue) and most (orange) unstable plants, where we can see that the learned policy converges to a feasible solution after 2000 episodes. 
After the training phase, we compare the performance of the learned allocation policy against baseline resource allocation solutions,  either sending control signals to all the plants at each time instant or prioritizing some plants to transmit at each time. In the latter case, we consider the following heuristics:
\begin{enumerate}
\item channel-aware selection: choose $ m / 3 $ plants with best channel conditions to transmit; %
\item control-aware selection: choose $ m / 3 $ plants furthest away from the equilibrium point to transmit; 
\item round-robin: schedule transmission of  $ m / 3 $ plants per time slot.
\end{enumerate}
To make comparisons fair, the power budget available at each time instant, $p_{\max}$, is equally divided among the transmitting signals.
To compare the learned policy against the baseline resource allocation solutions described above and see how well the learned policy generalizes to longer horizons not seen during training, we consider a horizon $T = 120$ during runtime.
Figure \ref{fig:linear_system_safety_sims} (right) shows a comparison between the learned policy and the baseline solutions mentioned above. The figure shows the mean and range of the finite horizon cost per test, with each test consisting of ten simultaneous realizations starting from different initial states. 
The learned allocation policy, in blue, clearly outperforms the heuristic solutions. 

\begin{figure*}[htb]
    \centering
    \begin{minipage}{0.33\textwidth}
        \centering
       \includegraphics[width=\linewidth]{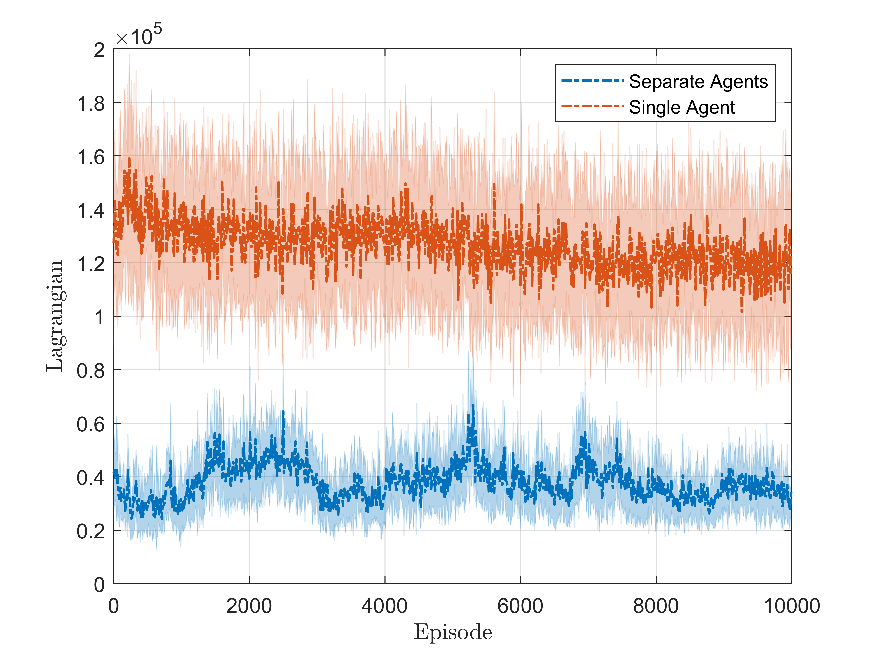}
    \end{minipage} 
        \hfill %
        \begin{minipage}{0.33\textwidth}
        \centering
      \includegraphics[width=\linewidth]{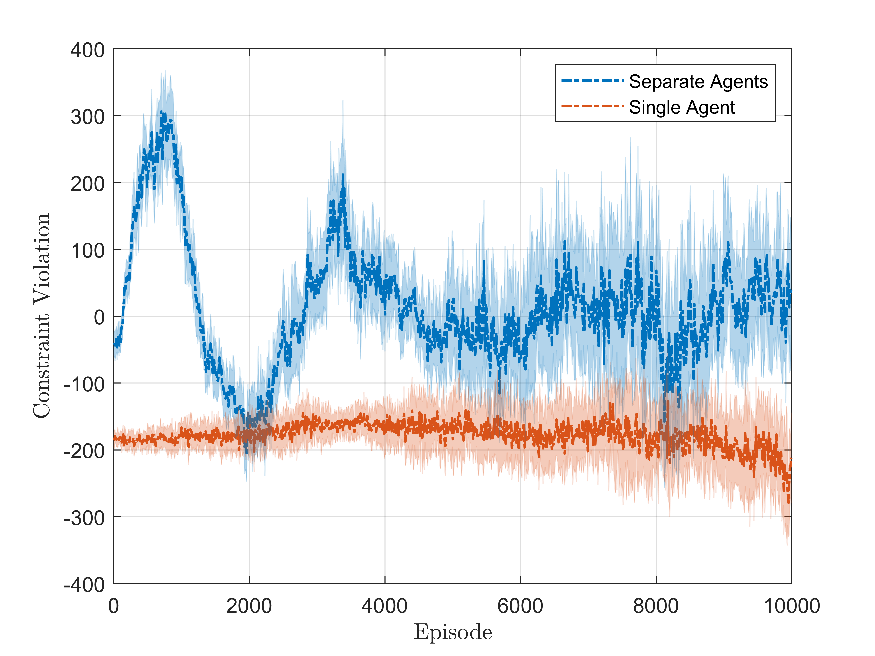}
    \end{minipage}%
    \hfill %
        \begin{minipage}{0.33\textwidth}
        \centering
      \includegraphics[width=\linewidth]{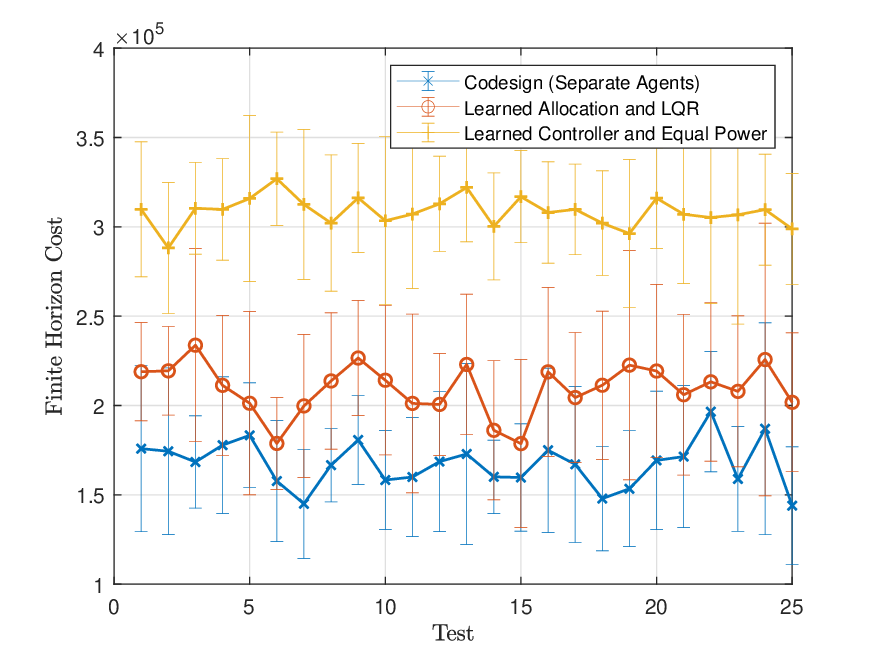}
    \end{minipage}%
    \caption{Linear system: (left) Lagrangian  and  (center) constraint violation per training episode. (Right) Cost comparison between the model-free codesign policy (blue), learning a resource allocation policy while using a LQR controller based on the plant's model (orange), and learning a control policy while distributing resource equally among the plants (yellow). }
    \label{fig:linear_system_codesign_sims}
\end{figure*}

\paragraph{Co-Design --- Linear System}
Next we evaluate scenarios where one must learn both the resource allocation and control policies. 
An immediate approach  is to consider a single, centralized agent simultaneously learning both policies. That leads to a problem with high dimensionality, however, and learning good policies in this setting is challenging, especially as the number of users in the system grows --- since the number of parameters to be learned depends on the number of plants in the wireless system and on the dimension of the plants states, cf. equation \eqref{eq:nn_output} --- or in the presence of more unstable plants.
To address these practical challenges, we further consider an alternative RL formulation that features a separation in design between the communication and control policies, and compare two approaches to solve the problem, namely 
\begin{enumerate}
\item Single agent: simultaneously learning control and allocation policies (cf. equations \eqref{eq:joint_design_state_RL} - \eqref{eq:joint_design_action_RL});
\item Separate agents: allocation decisions are computed  by an AP agent, based on plants states and channel conditions; control actions are computed by an RC agent, based on plants states, channel conditions and allocation actions.
\end{enumerate}
In the second approach, the states of the agents representing the AP and the RC are given by
\begin{equation}
		s^{\text{AP}}_t = \left[ \tilde{h}_t; \tilde{x}_t  \right];  s^{\text{RC}}_t = \left[ \tilde{h}_t;\tilde{x}_t; \alpha_t  \right]
	\label{eq:codesign_separate_1}
\end{equation}
with $s_t^{AP} \in \mathbb{R}^{m \times (1 + p)}$,  $s_t^{RC} \in \mathbb{R}^{m \times (1 + p + 1)}$. The AP agent outputs allocation decisions taken according to the resource allocation policy $\pi_{\alpha}(\tilde{h},\tilde{x}; \theta_\alpha)$, while the RC agent generates control signals $u_t$ according to the control policy $\pi_{u}(\tilde{h},\tilde{x}, \alpha; \theta_u)$.
Since we assume plants to have independent dynamics,  we can learn a controller for each plant under approach (2)  to reduce the dimensionality of the learning problem. Then, the AP agent solves the constrained RL problem described in equation \eqref{eq:constrained_optimal_prob_param_single_agent} via a primal-dual algorithm, while the RC agent solves a standard RL problem to minimize the quadratic cost
\begin{equation}
	J(\theta_u) = \mathbb{E}_{x_0}^{\pi_u(\cdot; \theta_u)} \left[ \sum_{t = 0}^{\infty} \gamma^t (x_t^{(i)\trp}Q_i x_t^{(i)} + u_t^{(i)\trp}R_{i} u_t^{(i)}) \right]
\end{equation}
for each plant $i$, with $Q_{i}$ and $R_{i}$ submatrices of $Q$ and $R$ of appropriate dimensions. 

Here, we consider a collection of $m = 10$ linear plants,
\begin{equation}
A^{(i)} = \begin{bmatrix}
-1.01 & 0.5 & 0.5 \\ -0.5 & 1.01 & 0.5 \\ 0 & 0.5 & -0.5 
\end{bmatrix}; \quad
B^{(i)} = \mathbb{I},
\end{equation}
and initial states sampled from a standard normal distribution. The plants are unstable without a control input, with eigenvalues of $A^{(i)}$ closer to 1 than in \eqref{eq:matrices_value_power_allocation}. This means that the plants states do not grow as fast as in \eqref{eq:matrices_value_power_allocation}, making it easier to learn a control policy under long simulation horizons, since the quadratic cost in \eqref{eq:constrained_optimal_prob} will not grow as fast either. Plants are randomly placed in a $[-m/3, m/3]^2$ area around the RC, with $\lambda_h = 1$, $p_l = 2$ and  $W^{(o)} = \mathbb{I}$, and share an overall power budget $p_{\max} = 25m$ (see Example \ref{ex_power_allocation}). 
Performance of the allocation and control policies is measured by a standard LQR cost with $Q = \mathbb{I}$ and $R = 10 ^{-3} \times \mathbb{I}$ over a training horizon $T = 100$. 
Here, $\beta_\theta = \beta_\eta = 5 \times 10^{-4}, \beta_\lambda = 1 \times 10^{-4}$, and $N = 16$. 

 Figure \ref{fig:linear_system_codesign_sims} shows the mean and one standard deviation of (left) the Lagrangian and (center) the constraint violation for the approaches mentioned above. 
 The results show that separating the design of control and allocation policies, with the subsequent reduction in the dimensionality of the learning space, indeed improves performance. The second approach, which relies on iteratively learning control-aware resource allocation policies and channel-aware controllers, achieved a better optimization objective than the approach using a single agent.

The previous experiment shows that separating the control and allocation agents, with the resulting reduction of the dimensionality of the learning problem, improves performance. Next, we compare that approach to potential solutions to the co-design problem, namely 
\begin{enumerate}
\item learning only the control policy (under ideal communication conditions) while using a control-agnostic resource allocation heuristic --- namely, dividing power equally among the plants --- during the test phase;
\item learning only the allocation policy while the plants are controlled by a channel-agnostic, model-based LQR controller.
\end{enumerate}
Results are summarized in Figure \ref{fig:linear_system_codesign_sims} (right).
Each test consists of ten realizations starting from different initial states, and each test point shows the mean and standard deviation for that group of realizations.
The simulation results show that the co-design policy outperforms the learned allocation policy that utilizes a model-based LQR controller in this scenario. Learning only the control policy while dividing power equally among the plants performs worse than both the other approaches in this scenario. Note that this simulation considers linear plants, for which a LQR controller is optimal under ideal communications, so it is expected that a LQR controller in combination with a learned allocation policy would work very well. 
In settings where plant models are known, one could then use a LQR controller as an initialization to the data-driven control policy, or use a model-based but communication-agnostic LQR controller with a learned correction factor that would take communication constraints into account. That might be especially useful as we consider more unstable plants, which makes the control design part of the problem, in particular, more challenging. Nonetheless, we emphasize that without models we are able to jointly learn controller and allocation policy that achieves the performance of model-based LQR in this scenario.

\paragraph{Co-Design --- Nonlinear System}
Next we consider a collection of nonlinear control plants, with each plant corresponding to a classical inverted pendulum model. 
The plants are randomly placed in a $[-m/3, m/3]^2$ area around the RC, with $\lambda_h = 2$, $p_l = 2$ and  $W^{(o)} = 0.1 \times \mathbb{I}$. The objective here is to keep the pendulum upright and the cart around the origin by applying some force $F$.  
The state of each plant is given by $x_t^{(i)} = \left[y^{(i)}, \dot{y}^{(i)}, \theta_p^{(i)}, \dot{\theta_p}^{(i)} \right]$ with $y$ the cart position, $y \in [-4.5, 4.5]$, and $\theta_p$ the pole angle, $\theta_p \in [-12, 12] $ degrees.
Here we consider a system with $m = 10$ plants sharing an overall resource budget $p_{\max} = 25m$ as in Example \ref{ex_power_allocation}. The control input $F$ is constrained to the interval $[-10, 10] $. 
\begin{figure*}[htb]
    \centering
    \begin{minipage}{0.33\textwidth}
        \centering
       \includegraphics[width=\linewidth]{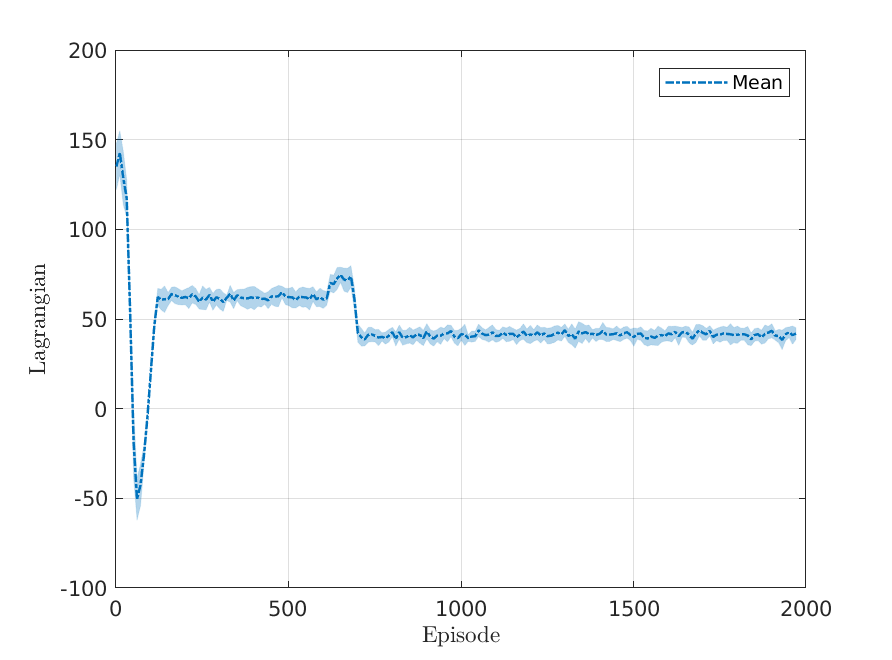}
    \end{minipage} 
        \hfill %
        \begin{minipage}{0.33\textwidth}
        \centering
      \includegraphics[width=\linewidth]{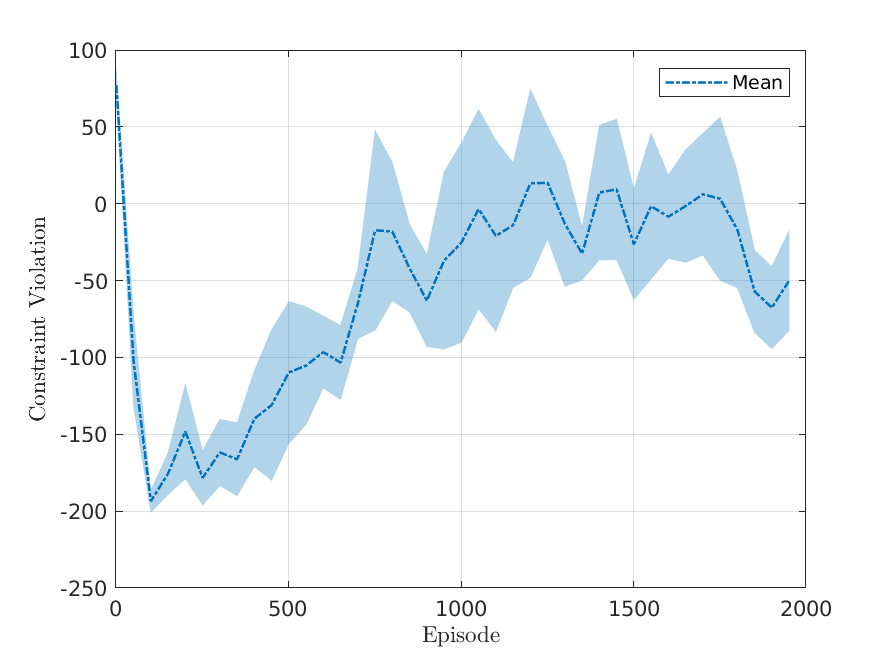}
    \end{minipage}%
    \hfill %
        \begin{minipage}{0.33\textwidth}
        \centering
      \includegraphics[width=\linewidth]{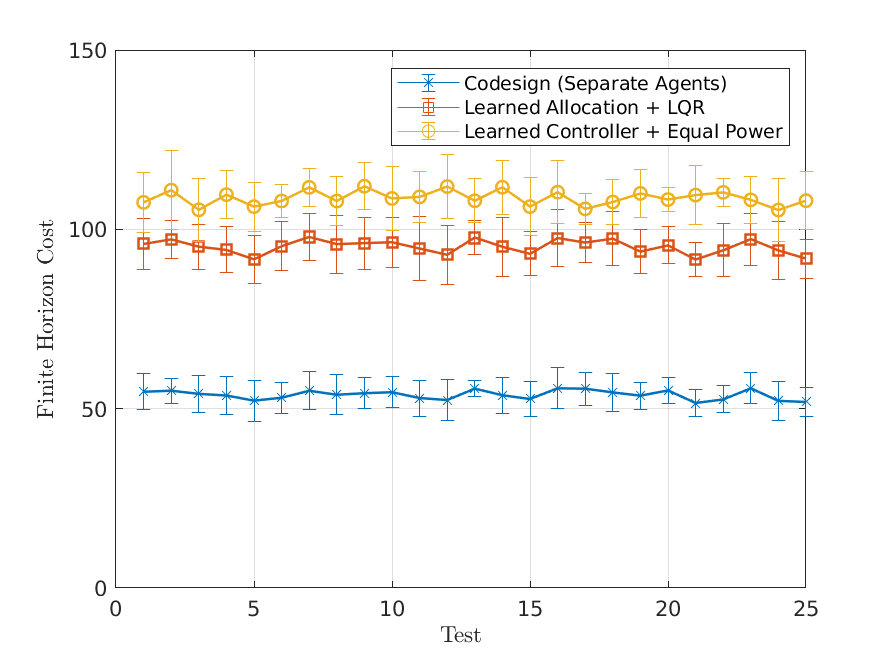}
    \end{minipage}%
    \caption{Nonlinear system: (left) Lagrangian  and  (center) constraint violation per training episode. (Right) Cost comparison between the model-free codesign policy (blue), learning a resource allocation policy while using a LQR controller based on a local linearization (orange), and learning a control policy while distributing resource equally among the plants (yellow). }
    \label{fig:nonlinear_system_sims}
\end{figure*}

We customize OpenAI Gym's \emph{Cartpole} environment to make the control decisions continuous in $[-10, 10]$ and to account for a linear quadratic cost with
\begin{equation}
Q = 
\begin{bmatrix}
0.1 & 0 & 0 & 0 \\
0 & 0 & 0 & 0 \\
0 & 0 & 1 & 0 \\
0 & 0 & 0 & 0
\end{bmatrix}; \;
R = 0.001 \times \mathbb{I}.
\end{equation}
   The simulation uses a standard Euler integration method with time step $dt = 0.02$. Initial states of the plants are sampled from a uniform distribution, $x_0^{(i)} \sim \mathcal{U}[-0.05, 0.05] $. Policies were trained with $N = 16$ simultaneous realizations over $ 2000 $ episodes of duration $T = 80$, and with  learning rates $\beta_\theta = \beta_\eta = 5 \times 10^{-4}$, $\beta_\lambda = 1 \times 10^{-4}$. To facilitate training, we first trained the codesign controller while dividing power equally among the plants. The codesign policy quickly converges to a feasible policy, as shown in Figure \ref{fig:nonlinear_system_sims} (center).

As in Section \ref{sec:num_exp} (b), we compare the codesigned policy against an allocation policy using a model-based controller and a controller learned under ideal communication conditions. The LQR controller is based on a linearization of 
the nonlinear model around $\theta_p = 0$ and is designed to minimize a quadratic  performance criterion with the matrices $Q$ and $R$ as defined above. To compare the different learned policies, we used a horizon $T = 100$.
Simulation results are shown in Figure \ref{fig:nonlinear_system_sims}.
  Each test point corresponds to an average over 10 realizations starting from different initial states. In this case, the model-free codesign solution (blue) outperforms the solution where the control policy was learned  from scratch while distributing resources equally among the plants. The codesign solution also outperforms the solution that used a standard model-based LQR controller but learned the allocation policy. We point out that in this case the LQR controller is based on a local approximation of the plants valid around the equilibrium point, but nonetheless requires a valid model for the local approximation.  

\section{Conclusion}

This paper presents a model-free approach to design resource allocation and control policies in wireless control systems subject to long-term constraints on utilization of communication resources or performance of individual plants. 
 We show that the optimality loss resulting from the use of policy parametrizations can be bounded, and present conditions under which near-optimal policies can render the system stable. We also present a practical primal-dual RL algorithm to design control-aware communication policies and communication-aware control policies.
 Numerical experiments attest to the strong performance of the proposed solution over heuristic baselines. Extensions of the proposed approach to other types of access mechanisms, such as URLLC (ultra-reliable low-latency communication) / 5G, is a very interesting future direction in the study of model-free communication and control co-design. Another potential future direction is the extension of the proposed approach to design \emph{uplink} resource allocation policies that minimize the impact of dropouts in the sensing channels on the performance of the control and allocation policies.
Model-based optimal solutions for linear control problems (under ideal communication conditions), on the other hand, are well understood, pointing out to the potential combination of model-free and model-based techniques to design communication-aware controllers in settings where plant models are known. This could be done by learning a correction factor for a standard model-based LQR controller that would take wireless fading conditions and allocation decisions into account, for example, or using known, model-based controllers to initialize control policies via imitation learning.

\section*{Acknowledgments}
This work was supported by Intel Science and Technology Center for Wireless Autonomous Systems and ARL DCIST CRA W9111NF-17-2-0181. 

\appendix
\section{Proof of Theorem \ref{theorem:dual_param_bound}}
\label{proof:dual_param_bound}

Theorem 1 is adapted from  \cite[Theorem 2]{paternain2019constrained}, and the proof falls along similar lines. To start, consider  the dual of the non-parametric resource allocation problem \eqref{eq:constrained_optimal_prob},
\begin{equation}
\begin{aligned}
D^* &= \max_{\lambda \geq 0 } d(\lambda) = \max_{\lambda \geq 0 } \min_{\pi} \mathcal{L} (\pi, \lambda), \\
\mathcal{L} (\pi, \lambda) &:= J(\pi)
+ \lambda^\trp \mathbb{E}_{x_{0}}^{\pi(s; \theta)}  \left[  \sum_{t = 0}^\infty \gamma^t l(x_t, h_t, \alpha_t)  \right].
\end{aligned}
\label{eq:lagrangian_nonparam}
\end{equation}

Strong duality holds for the non-parametrized resource allocation problem, that is, the optimal value of the primal \eqref{eq:constrained_optimal_prob} and dual \eqref{eq:lagrangian_nonparam} problems coincide, as shown in Theorem \ref{theo:nonparam_duality}, adapted from \cite{paternain2019constrained}.
\begin{theorem}[Non-parametric duality gap (Theorem 1, \cite{paternain2019constrained})]
	\label{theo:nonparam_duality} Suppose that the one-step cost $r_t(s,a)$ and the constraints $l^{(i)}(s,a)$ are bounded for all $(s,a) \in \mathcal{S} \times \Gamma$, $i = 1, \dots, r$, and that Slater's condition holds for \eqref{eq:constrained_optimal_prob}. Then, strong duality holds for \eqref{eq:constrained_optimal_prob}.
\label{prop:nonparam_duality_gap}
\end{theorem}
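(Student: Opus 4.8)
The plan is to reproduce the ``zero duality gap'' argument of \cite{paternain2019constrained}: the minimization in \eqref{eq:constrained_optimal_prob} looks nonconvex when written over policies, but it is secretly a linear program once one passes to discounted state--action occupation measures, and a linear program for which Slater's condition holds has no duality gap. First I would, for each policy $\pi$ in the admissible class $\Pi$ (taken to include randomized, possibly history-dependent policies with actions in $\mathcal{A}\times\mathcal{U}$), define the normalized discounted occupation measure $\rho_\pi(\cdot)=(1-\gamma)\sum_{t=0}^{\infty}\gamma^{t}\,\mathbb{P}_\pi\{(s_t,a_t)\in\cdot\}$ on $\mathcal{S}\times\Gamma$. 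Boundedness of the one-step cost $r_t$ and of the constraints $l^{(i)}$, together with $\gamma<1$ (needed for the discounted series to converge), makes every quantity in \eqref{eq:constrained_optimal_prob} and \eqref{eq:lagrangian_nonparam} finite and, crucially, \emph{linear} in $\rho_\pi$: $J(\pi)=(1-\gamma)^{-1}\langle\rho_\pi,r\rangle$ and the $i$-th constraint value equals $(1-\gamma)^{-1}\langle\rho_\pi,l^{(i)}\rangle$, with the interchange of sum and expectation justified by dominated convergence.

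The heart of the proof is to show that the set $\mathcal{R}=\{\rho_\pi:\pi\in\Pi\}$ of achievable occupation measures is convex. I would do this via the Bellman-flow characterization: $\mathcal{R}$ is exactly the set of probability measures $\rho$ on $\mathcal{S}\times\Gamma$ whose $\mathcal{S}$-marginal equals $(1-\gamma)$ times the initial law plus $\gamma$ times the pushforward of $\rho$ under the transition kernel, and any such $\rho$ is realized by the stationary policy obtained by disintegrating $\rho$ over its $\mathcal{S}$-marginal. Being cut out by affine equalities together with the simplex constraint, $\mathcal{R}$ is convex. Equivalently, and closer to \cite{paternain2019constrained}, one argues directly that the perturbation function $P^{*}(\xi):=\min\{\,J(\pi):\text{constraint values}\le\xi\,\}$ is convex in $\xi$: given $\pi_1,\pi_2$ feasible for $\xi_1,\xi_2$, the measure $\theta\rho_{\pi_1}+(1-\theta)\rho_{\pi_2}$ lies in $\mathcal{R}$, its associated policy is feasible for $\theta\xi_1+(1-\theta)\xi_2$, and its objective is at most $\theta J(\pi_1)+(1-\theta)J(\pi_2)$. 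Either way, \eqref{eq:constrained_optimal_prob} is equivalent to the convex (indeed linear) program of minimizing $\langle\rho,r\rangle$ over $\rho\in\mathcal{R}$ subject to $\langle\rho,l^{(i)}\rangle\le 0$.

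To close, I would invoke standard convex duality. Slater's condition for \eqref{eq:constrained_optimal_prob} supplies a policy with strictly negative constraint values, so $0$ lies in the relative interior of the (effective) domain of the finite, convex perturbation function $P^{*}$; hence $P^{*}$ admits a non-vertical supporting hyperplane at $\xi=0$ whose slope is a multiplier $\lambda^{*}\ge 0$ with $P^{*}=P^{*}(0)=\min_\pi\mathcal{L}(\pi,\lambda^{*})=d(\lambda^{*})\le D^{*}\le P^{*}$, giving $P^{*}=D^{*}$ with the dual maximum attained. The hard part will be the rigorous verification that $\mathcal{R}$ is convex (equivalently that $P^{*}$ is convex) on general, non-tabular $\mathcal{S}\times\Gamma$: this requires measure-theoretic care in disintegrating a mixed occupation measure into a bona fide stationary Markov policy and in confirming that allowing history dependence and randomization does not enlarge the achievable set beyond $\mathcal{R}$. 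The remaining steps --- linearity in $\rho_\pi$, finiteness from boundedness, transfer of Slater's condition, and the supporting-hyperplane argument --- are routine; and since the statement is explicitly adapted from \cite{paternain2019constrained}, in the write-up I would lean on their Theorem~1 for the convexity/zero-gap core and merely check that the cost and constraint structure of \eqref{eq:constrained_optimal_prob} meets its hypotheses (boundedness of $r_t$ and $l^{(i)}$, Slater's condition), which hold by assumption.
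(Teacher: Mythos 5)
Your proposal is correct and ultimately takes the same route as the paper, which proves this statement simply by citing Theorem~1 of \cite{paternain2019constrained}; your reconstruction of that reference's argument (occupation measures, convexity of the perturbation function, Slater plus a supporting hyperplane) is an accurate account of what lies behind the citation. Since you explicitly state you would lean on their Theorem~1 and merely verify its hypotheses (boundedness of $r_t$ and $l^{(i)}$, Slater's condition), your write-up matches the paper's proof in substance.
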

\begin{proof}
See \cite[Theorem 1]{paternain2019constrained}.
\end{proof}

The bound tying the solutions of the constrained co-design problem \eqref{eq:constrained_optimal_prob} and the parametrized dual problem \eqref{eq:dual_opt_problem} depends on the following relaxation of the co-design problem:

 \begin{equation}
 \begin{aligned}
    P_\xi^*=\minimize_{\pi(\cdot)} \, &\mathbb{E}^{\pi(\cdot)}_{x_0}  \left[\sum_{t = 0}^{\infty}\gamma^t \left( x_t^\trp Q x_t  + u_t^\trp R u_t \right) \right] \\
  \text{s.t.} \, &\mathbb{E}_{x_0}^{\pi(\cdot)} \left[ \sum_{t = 0}^\infty \gamma^t l^{(i)}(x, h, \alpha) \right] \leq \xi_i, i = 1, \dots, r  \\
  & \alpha(\tilde{h},\tilde{x}) \in \mathcal{A}; \quad  u(\tilde{h},\tilde{x}) \in \mathcal{U}, \\
 \end{aligned}
 \label{eq:primal_perturbed}
\end{equation}
where $\xi_i \in \mathbb{R}$.
We are now ready to prove Theorem \ref{theorem:dual_param_bound}.
\begin{proof}
Since the introduction of the parameter vector in \eqref{eq:constrained_optimal_prob_param_single_agent} reduces the set of minimizers of the Lagrangian, we have that $d_\theta(\lambda) \geq d(\lambda)$ for any $\lambda \in \mathbb{R}^r$ --- including the solution of the dual problem \eqref{eq:lagrangian_nonparam}, $\lambda^*$ --- and thus
\begin{equation}
D^* = d(\lambda^*) \leq d_\theta(\lambda^*) \leq D_\theta^*.
\end{equation}
From Theorem \ref{prop:nonparam_duality_gap}, on the other hand, we know that the duality gap for the non-parametric constrained problem is null, that is, $P^* = D^*$, from which the lower bound $D_\theta^* \geq P^*$ follows. To find the upper bound in Theorem \ref{theorem:dual_param_bound}, we first rewrite the parametrized dual function \eqref{eq:dual_opt_problem} as
\begin{equation}
d_\theta(\lambda) = d(\lambda) + \left(\min_{\theta} \mathcal{L}(\theta, \lambda) - \min_{\pi \in \mathcal{P}(\mathcal{S})}  \mathcal{L}(\pi, \lambda)  \right).
\end{equation}
Assume $\pi^*$ minimizes the Lagrangian in \eqref{eq:lagrangian_nonparam}, and let $\theta^*$ an $\epsilon$-approximation of $\pi^*$. Then, we have that
\begin{equation}
d_\theta(\lambda) \leq d(\lambda) + \left( \mathcal{L}(\theta^*, \lambda) -   \mathcal{L}(\pi^*, \lambda)  \right).
\label{eq:proof_dual_nonparam_dthetadlamb}
\end{equation}
Now, let us bound the difference above. Note that 
\begin{equation}
 \mathcal{L}(\theta^*, \lambda) -   \mathcal{L}(\pi^*, \lambda)  = \int_{\mathcal{S} \times \Gamma} (c + \lambda^\trp l) (d \rho{_\theta^*} - d \rho{_\pi^*}),
\end{equation}
with $c$ and $l$ standing for the one-step cost and (vector of) constraints, respectively, and $\rho_{\pi^*}$, $\rho_{\theta^*}$ representing the occupation measure induced by $\pi^*$ and $\pi_{\theta^*}$, respectively.  
Since $\theta^*$ is an $\epsilon$-universal approximation of $\pi^*$, we know from Lemma 1 in \cite{paternain2019constrained} that
\begin{equation}
\int_{\mathcal{S} \times \Gamma} | d \rho^*(\lambda) - d \rho^*_\theta(\lambda) | \leq \frac{\epsilon}{1 - \gamma}.
\end{equation}
Thus,
\begin{equation}
\begin{aligned}
\mathcal{L}(\theta^*, \lambda) -   \mathcal{L}(\pi^*, \lambda)  & \leq \left| \int_{\mathcal{S} \times \Gamma} (c + \lambda^\trp l) (d \rho{_\theta^*} - d \rho{_\pi^*}) \right| \\
& \leq \left( B_r + \| \lambda^* \|_1 B_l \right) \frac{\epsilon}{1 - \gamma},
\end{aligned}
\end{equation}
where the last inequality follows from the bounds on the cost and constraint functions stated in Theorem \ref{theorem:dual_param_bound}.
Combining this expression with \eqref{eq:proof_dual_nonparam_dthetadlamb}, we have
\begin{equation}
d_\theta(\lambda)  \leq d (\lambda) + \left( B_r + \| \lambda^* \|_1 B_l \right) \frac{\epsilon}{1 - \gamma}.
\end{equation}
Now let us define 
\begin{equation}
d_\epsilon (\lambda) = d (\lambda) + \left( B_r + \| \lambda^* \|_1 B_l \right) \frac{\epsilon}{1 - \gamma},
\label{eq:d_epsilon_ineq}
\end{equation}
and note that this is the dual function associated to the perturbed primal problem \eqref{eq:primal_perturbed} with perturbation $\xi_i = - B_l \epsilon / (1 - \gamma)$, $i = 1, \dots, m$. Then, we can rewrite \eqref{eq:d_epsilon_ineq} as
\begin{equation}
d_\theta (\lambda) \leq d_\epsilon (\lambda) + B_r  \frac{\epsilon}{1 - \gamma}.
\end{equation}
But that inequality holds for any $\lambda$, and in particular it holds for $\lambda^*_\theta$, the dual solution of the parametrized problem \eqref{eq:dual_opt_problem},
\begin{equation}
D_\theta^* \leq d_\epsilon (\lambda_\theta^*) + B_r  \frac{\epsilon}{1 - \gamma}.
\end{equation}
On the other hand, $\lambda_\epsilon^* = \arg \max d_\epsilon(\lambda)$, and thus
\begin{equation}
D_\theta^* \leq \min_{\pi \in \Pi} J(\pi) + \sum_{i = 1}^r \lambda_{\epsilon, i}^* \left( L_i(\pi) +  B_l  \frac{\epsilon}{1 - \gamma} \right) + B_r  \frac{\epsilon}{1 - \gamma},
\end{equation}
with $L_i(\pi) \coloneqq \mathbb{E}_{x_0}^{\pi(\cdot)} \left[ \sum_{t = 0}^\infty \gamma^t l^{(i)}(x, h, \alpha) \right]$.
But we can upper bound the previous expression by selecting $\pi^*$ the solution of the original constrained co-design problem \eqref{eq:constrained_optimal_prob}, leading to
\begin{equation}
D_\theta^* \leq  J(\pi^*) + \sum_{i = 1}^r \lambda_{\epsilon, i}^* \left( L_i(\pi^*) +  B_l  \frac{\epsilon}{1 - \gamma} \right) + B_r  \frac{\epsilon}{1 - \gamma}.
\end{equation}
Since we assume the problem to be feasible, we have that  $L_i(\pi^*) \leq 0 $, from which we can conclude that
\begin{equation}
D_\theta^* \leq  P^* + \left( B_{r} + \| \lambda_\epsilon^* \|_1 B_l \right) \frac{\epsilon}{1 - \gamma}.
\end{equation}

\end{proof}

	\section{Proof of Theorem \ref{theo:approx_sol_stab}}
	\label{proof:approx_sol_stab}
	
	Let $x_t$ represent the evolution of the plants states under the optimal, non-parametrized policy $\pi^*$, and let $\tilde{x}$ represent the system controlled by a policy $\pi^*_\theta$ parametrized by an $\epsilon$-universal parameter $\theta^*$ found by the dual descent procedure described in Algorithm \ref{alg:dual_descent}. 
	Under the conditions of Theorem \ref{theo:approx_sol_stab}, $x^{(i)} = 0, i = 1, \dots, m$ is an equilibrium point for the wireless control system and $V(0)= 0$, $V_{\theta^*}(0) = 0$. On the other hand, 
	\begin{equation}
		V(x) \geq x^\trp  Q x \geq \varphi(\|x\|) 
	\end{equation}
	with $\varphi(\| x\|) = \lambda_{\min}(Q) \|x\|^2$. The same inequality holds for $V_{\theta^*}(x)$, thus both functions are positive definite \cite{li_stability_2013}. Following \cite{li_stability_2013}, assume that $\nu < r$ and let $\mu$ be the first exit time of $\tilde{x}_t$ from $D_h$, i.e.,
	\begin{equation}
		\mu \coloneqq \inf \left\{ t \geq t_0 \, : \, \tilde{x}_t \not \in D_h \right\}.
	\end{equation}
	We have from \eqref{eq:stab_vf_lyap_cond} that for any $\tilde{x}_t \in D_r \subset \mathbb{R}^{mp}$, 
$\mathbb{E} \left[ V (\tilde{x}_{t + 1})\right] \leq   V (\tilde{x}_{t}) $, and thus (see \cite{li_stability_2013}),
\begin{equation}
\mathbb{E} \left[ V (\tilde{x}_{ \mu \wedge t })\right] \leq   V (x_{0}). 
\end{equation}

	Since we assume dual descent to converge to a regular local minimum, we have from the necessary KKT conditions that either the optimal Lagrange multipliers converge to zero (if the constraints are inactive at that point), or the inequality constraints reduce to zero themselves. Then, Theorem  \ref{theorem:dual_param_bound} implies 
	\begin{equation}
		\mathbb{E} \left[ V_{\theta^*} (\tilde{x}_{\mu \wedge t}) \right] \leq  \mathbb{E} \left[ V (\tilde{x}_{\mu \wedge t}) \right] + \kappa \epsilon \leq V(x_0) + \kappa \epsilon,
	\end{equation}
with $\kappa$ standing for the term multiplying $\epsilon$ in \eqref{eq:dual_param_bound}. Now, if $\mu < t$, then we have that \cite{li_stability_2013}
\begin{equation}
	\varphi(\nu) P \{ \mu \leq t \} \leq \mathbb{E} \left[ \mathbbm{1}_{\mu \leq t} V_\theta(\tilde{x}_\mu) \right] \leq \mathbb{E} \left[ V_\theta(\tilde{x}_\mu) \right].
\end{equation}
But since we assume $V$ and $V_\theta$ to be continuous, we can find, for any $\zeta \in (0, 1)$, $\delta = \delta(\zeta, \nu, t_0) > 0$ such that $V(x) \leq \zeta \varphi(\nu)$ for any $x \in D_\delta$ \cite{li_stability_2013}. Thus, we have that
\begin{equation}
	\varphi(\nu) P \{ \mu \leq t \} \leq V(x_0) + \kappa \epsilon \leq \zeta \varphi(\nu) + \kappa \epsilon
\end{equation}
and thus
\begin{equation}
	P \left\{ \mu \leq t \right\} \leq \zeta + \frac{\kappa \epsilon}{\varphi(\nu)}.
\end{equation}
Finally, letting $t \to \infty$, \eqref{eq:stab_prob} follows.

\bibliography{wl_control}
\end{document}